\newcommand{\mc}{\mathcal}
\pgfmathsetmacro\MathAxis{height("$\vcenter{}$")}
\newcommand{\tikzlength}{0.7}
\definecolor{darkred}  {rgb}{0.5,0,0}
\definecolor{darkblue} {rgb}{0,0,0.5}
\definecolor{darkgreen}{rgb}{0,0.5,0}
\newtheorem{dfn}{Definition}
\newtheorem{fact}{Fact}
\newtheorem{thm}{Theorem}
\newtheorem{cor}{Corollary}
\newtheorem{lem}{Lemma}
\newtheorem{remark}{Remark}
\title{Learning quantum states prepared by shallow circuits\\ in polynomial time}
\author{Zeph Landau\thanks{UC Berkeley. \href{mailto:zeph.landau@gmail.com}{zeph.landau@gmail.com}}
\and Yunchao Liu\thanks{UC Berkeley and Harvard University. \href{mailto:yunchaoliu@berkeley.edu}{yunchaoliu@berkeley.edu}}}
\begin{document}
\date{\vspace{-10mm}}
\maketitle
\begin{abstract}
    We give a polynomial time algorithm that, given copies of an unknown quantum state $\ket{\psi}=U\ket{0^n}$ that is prepared by an unknown constant depth circuit $U$ on a finite-dimensional lattice, learns a constant depth quantum circuit that prepares $\ket{\psi}$. The algorithm extends to the case when the depth of $U$ is $\mathrm{polylog}(n)$, with a quasi-polynomial run-time. The key new idea is a simple and general procedure that efficiently reconstructs the global state $\ket{\psi}$ from its local reduced density matrices. As an application, we give an efficient algorithm to test whether an unknown quantum state on a lattice has low or high quantum circuit complexity.
\end{abstract}

\section{Introduction}
We consider the following fundamental question in quantum complexity theory: is there an efficient algorithm to learn quantum states of low circuit complexity? That is, given copies of an unknown quantum state $\ket{\psi}$ that was generated from a constant depth quantum circuit acting on a finite-dimensional lattice, whether there exists an efficient algorithm that learns a circuit to prepare $\ket{\psi}$. In this paper we answer this question in the affirmative.

\begin{thm}[Main result, simplified version of~\cref{thm:maindetailed}]
\label{thm:main}
    There is an algorithm that, given copies of an unknown state $\ket{\psi}$, with the promise that $\ket{\psi}=U\ket{0^n}$ where $U$ is an unknown depth-$d$ unitary circuit acting on a $k$-dimensional lattice (using arbitrary 2-qubit gates), outputs a depth-$(2k+1)d$ unitary circuit $W$ that prepares $\ket{\psi}$ up to $0.01$ trace distance, with success probability $0.99$. The algorithm uses $M$ copies of $\ket{\psi}$ and runs in time $T$, where
    \begin{equation}
        M=\Tilde{O}(n^4)\cdot 2^{O(c)},\quad T=\Tilde{O}(n^4)\cdot 2^{O(c)}+\left(nkd\cdot c\right)^{O(d\cdot c)}.
    \end{equation}
Here, $c=O((3k)^{k+2} d)^k$, and $W$ uses $r\cdot n$ ancilla qubits where $r>0$ can be chosen to be an arbitrarily small constant.
\end{thm}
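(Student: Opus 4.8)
The plan is to exploit the defining structural feature of shallow-circuit states: because $U$ has depth $d$ on a $k$-dimensional lattice, every output qubit has a backward light cone of radius $O(d)$, so any two regions separated by a buffer of width $\gtrsim 2d$ become conditionally independent given the buffer. Concretely, the first step is to prove a structural lemma: for a tripartition $A\,|\,B\,|\,C$ in which $B$ shields $A$ from $C$, the conditional mutual information $I(A:C\,|\,B)_\psi$ vanishes (equivalently, $\rho_{ABC}$ is recoverable from $\rho_{AB}$ by a channel supported near $B$), so that $\ket{\psi}$ is a quantum Markov state. This is exactly what makes the local reduced density matrices a complete description of the global state, and it is the property the reconstruction will invoke repeatedly.

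I would then reduce everything to two ingredients. For \emph{learning}, tile the lattice into hypercubic blocks of side length $\Theta(d)$ and run full tomography on each block together with a collar of neighbouring blocks, a region of $O(c)$ qubits, obtaining each relevant marginal to trace distance $\varepsilon$. Since such a region has dimension $2^{O(c)}$, this costs $\tilde{O}(2^{O(c)}/\varepsilon^2)$ copies per region, and with $O(n)$ regions and the error budget below this yields $M=\tilde{O}(n^4)\cdot 2^{O(c)}$. For \emph{reconstruction}, build $\ket{\psi}$ by a block-merging process driven by the learned marginals: prepare each block's contribution in parallel (one layer of depth $d$), then merge adjacent blocks by applying local recovery isometries that glue a new block onto the already-prepared region along their shared boundary, each isometry being determined by the learned marginals through the Markov-recovery map. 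Because merges of disjoint block pairs commute, all merges of a given lattice direction and parity fit into one layer, producing $2k$ merging layers (even/odd sweeps along each of the $k$ axes) plus the single preparation layer, for total depth $(2k+1)d$; the ancillas required by the isometries can be compressed to an $r\cdot n$ overhead. Computing each recovery isometry from its estimated marginal is a local classical task on an $O(c)$-qubit region, which accounts for the $(nkd\cdot c)^{O(d\cdot c)}$ term in $T$.

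The accuracy argument then strings together the $O(n)$ local merges: each contributes an error controlled by $\varepsilon$ together with the Markov-recovery error, and I expect these to accumulate at most additively, so it should suffice to take $\varepsilon$ of order $1/n$ (up to the $\mathrm{polylog}$ factors and block count that inflate the copy exponent to $n^4$) to force the final trace distance below $0.01$ with success probability $0.99$.

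The hard part, I expect, is the reconstruction rather than the learning. Two issues demand care. First, one must upgrade the recovery maps guaranteed abstractly by the Markov property to \emph{local unitaries (isometries)} of bounded range, and verify that stitching them in the prescribed geometric order yields the pure target state itself, not merely a state with the correct marginals; this is where the precise $(2k+1)d$ depth bookkeeping and the choice of block size relative to $d$ must be pinned down. Second, one must show that the per-merge error, combining the statistical error $\varepsilon$ with the error from approximating the true recovery by one computed from noisy marginals, does not amplify across the $O(n)$ sequential merges. Controlling this composition, and confirming that the $2^{O(c)}$ and $(nkd\cdot c)^{O(d\cdot c)}$ resource bounds survive the accuracy that the chaining forces, is the crux of the whole argument.
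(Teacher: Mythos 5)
Your route is genuinely different from the paper's: you reconstruct $\ket{\psi}$ via the quantum Markov property and recovery-map merging (essentially the approach of the independent work of Kim, Kim, and Ranard mentioned in the paper's note added), whereas the paper uses local inversions, replacement processes, and covering schemes. Your structural lemma is in fact correct in a strong, exact form: for a buffer $B$ of width $\geq 2d$ shielding $A$ from $C$, removing the lightcone gates of the cut shows there is a depth-$d$ unitary on $B$ mapping $\ket{\psi}$ to a product across the cut, hence $I(A:C|B)_\psi=0$ exactly. The central gap is what comes next: you never provide a mechanism for turning the abstract recovery map into a depth-$O(d)$ unitary that can be \emph{found and verified from estimated marginals}. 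The Petz-type map is a channel built from $\psi_B^{-1/2}$ and $\psi_{BC}^{1/2}$; its Stinespring dilation carries no depth-$O(d)$ circuit structure (so your output circuit is not a depth-$(2k+1)d$ unitary circuit at all), and it is numerically unstable against the $\mathrm{poly}(1/n)$ errors in the learned marginals (inverse square roots of nearly singular matrices), so your hoped-for additive error accumulation has no proof. The natural repair---brute-force search over depth-$d$ unitaries $W$ on the buffer that disentangle the state across the cut---hits a verification problem: the condition ``$W\ketbra{\psi}W^\dag$ is a product of two \emph{unknown} states across the cut'' cannot be certified from local reduced density matrices, since a marginal can factorize while the global state stays correlated through the complement. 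This is exactly the problem the paper's notion of local inversion (\cref{def:localinv}) is designed to solve: the condition $V\ket{\psi}=\ket{0}_A\otimes\ket{\phi}$ \emph{is} certified by the reduced density matrix on $A$ alone, because $\ket{0}_A$ is a known pure state and purity of a marginal forces global factorization. Without such a locally checkable certificate, your merging isometries cannot be learned, which is the crux you flagged but did not resolve.

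There is also a depth-accounting error. Any merge built from the Markov structure necessarily has the form ``apply disentangler $W$, swap in a freshly prepared piece, apply $W^\dag$,'' which costs depth $2d$, not $d$; your $2k$ even/odd sweeps would therefore give depth at least $(4k+1)d$, not the claimed $(2k+1)d$. The correct bookkeeping uses one depth-$d$ preparation layer plus $k$ merging stages of depth $2d$ (one stage per lattice axis, with all merges in a stage disjoint and applied in parallel); arranging the geometry so that this parallelism and the lightcone containment actually hold is precisely what the paper's covering scheme (\cref{def:coveringscheme}) and \cref{thm:coveringimplieslearning} formalize, giving depth $(2\ell-1)d$ with $\ell=k+1$. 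Finally, once the merges are replaced by locally verified unitary circuits, the additive error accumulation you ``expect'' does follow, by the triangle inequality and data processing exactly as in \cref{lemma:robustreplacement,lemma:robustreconstructionprocess}; but with Petz maps computed from noisy marginals, no such bound is available without additional minimum-eigenvalue assumptions. In short, your plan can be completed, but the missing ingredient---a locally verifiable, shallow, unitary substitute for the recovery map---is the substantive content of the paper, and the stated depth bound does not follow from your construction as written.
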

Note that the running time is polynomial when $d=O(1)$, and quasi-polynomial when $d=\mathrm{polylog}(n)$. In addition, the dominating term in the running time (second term in $T$) can be significantly improved when assuming a discrete gate set.

For quantum states prepared by shallow circuits, it is known that learning (sufficiently large) local reduced density matrices suffices to information-theoretically reconstruct the state~\cite{rouze2021learning,yu2023learning}. The question is whether the reconstruction can be computationally efficient. A naive approach finds small circuits for different local regions and stitch them together into a global circuit by checking local consistency, but this runs into a seemingly hard constraint satisfaction problem (in two and higher dimensions). Recent work~\cite{huang2024learning} developed an efficient algorithm in 2D by showing that to learn a 2D state it suffices to solve a 1D constraint satisfaction problem which is efficient; however this approach runs into the same issue at three and higher dimensions. Here we develop new techniques that do not rely on solving any consistency problem; this enables our algorithm to work at arbitrary dimensions.

\paragraph{Testing quantum circuit complexity.} As an application of our result, we also give an algorithm to test whether an unknown state on a $k$-dimensional lattice has low or high quantum circuit complexity.

\begin{cor}[Simplified version of~\cref{theorem:testing}]
\label{cor:testing}
    Fix some constant $L>0$. Given copies of an unknown state $\ket{\psi}$ on a $k$-dimensional lattice, with the promise that one of the following two cases hold:
    \begin{itemize}
    \item \textbf{Case 1: Low complexity.} $\ket{\psi}=U\ket{0^n}$ where the depth of $U$ is at most $L$;
    \item \textbf{Case 2: High complexity.} 
    Any state prepared by a constant depth circuit using $O(n)$ ancilla qubits is at least $0.01$-far from $\ket{\psi}$ in trace distance.
\end{itemize}
There is an algorithm that decides which is the case, with success probability at least 0.99, with polynomial sample and time complexity.
\end{cor}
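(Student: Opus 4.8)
The plan is to reduce the testing problem to the learning algorithm of \cref{thm:main} (more precisely its detailed version \cref{thm:maindetailed}, run with a small target accuracy $\epsilon$), followed by a cheap verification step. Concretely, I would first run the learning algorithm under the hypothesis of Case~1, i.e.\ with depth bound $d=L$ and target trace distance $\epsilon$, where $\epsilon$ is a small constant fixed below. Since $L$ and $k$ are constants, the output $W$ is a constant-depth circuit of depth $(2k+1)L$ using $rn=O(n)$ ancilla qubits, and this step costs $\mathrm{poly}(n)$ samples and time. If for any reason the learning subroutine does not return a valid bounded-depth circuit, I simply declare Case~2.

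Next I would verify whether the candidate $W$ actually reproduces $\ket{\psi}$. Let $\sigma=\Tr_{\mathrm{anc}}\!\big(W\ketbra{0}{0}W^\dagger\big)$ be the reduced state of $W\ket{0}$ on the $n$ system qubits; I can prepare copies of $\sigma$ for free, since $W$ is known. I would estimate the overlap $\langle\psi|\sigma|\psi\rangle=\Tr(\sigma\ketbra{\psi}{\psi})$ by running a SWAP test between a freshly prepared copy of $\sigma$ and a copy of $\ket{\psi}$: each test consumes one copy of $\ket{\psi}$ and $O(1/\delta^2)$ tests estimate the overlap to additive precision $\delta$, independently of $n$. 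I then accept (declare Case~1) if and only if the estimate exceeds a fixed threshold.

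Correctness rests on a constant-size gap between the two cases. In Case~1 the learning guarantee gives $D(\sigma,\ket{\psi})\le\epsilon$, hence $\langle\psi|\sigma|\psi\rangle=F^2\ge(1-\epsilon)^2$ via $1-F\le D$. In Case~2 the promise applies verbatim to $\sigma$, because $\sigma$ is exactly a state prepared by the constant-depth circuit $W$ on $O(n)$ ancilla qubits; thus $D(\sigma,\ket{\psi})\ge 0.01$, which through $D\le\sqrt{1-F^2}$ forces $\langle\psi|\sigma|\psi\rangle=F^2\le 1-(0.01)^2$. Choosing $\epsilon$ a small enough constant so that $(1-\epsilon)^2>1-(0.01)^2$ separates the two regimes by a constant margin; taking $\delta$ to be half that margin and repeating the SWAP test $O(1/\delta^2)=O(1)$ times in $n$ distinguishes them with success probability $0.99$. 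Together with the $\mathrm{poly}(n)$ cost of the learning step, the total sample and time complexity is $\mathrm{poly}(n)$.

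The only real subtlety---and the step I would be most careful about---is that the learning algorithm carries no correctness guarantee in Case~2, since its promise ($\ket{\psi}=U\ket{0}$ with bounded depth) is violated there. The resolution is structural rather than analytic: whatever the algorithm outputs (or fails to output), I only ever feed a single fixed constant-depth, $O(n)$-ancilla circuit into the verifier, and the Case~2 promise is stated precisely so as to lower-bound the infidelity of \emph{every} such circuit. Hence it is the verification step, not the learning step, that certifies rejection in Case~2, and the argument goes through without any behavioral assumption on the learner off-promise.
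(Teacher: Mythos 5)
Your proposal is correct, but it takes a genuinely different route from the paper. The paper's test never verifies the learned circuit against $\ket{\psi}$ at all: it runs only Steps 1--2 of the learning pipeline (build the covering scheme, learn the local reduced density matrices via classical shadows), then declares Case 1 if and only if every brute-force search for an approximate local inversion in Step 3 succeeds. Soundness comes from the structure of the learning analysis itself --- if all searches succeed, then \cref{lemma:robustreconstructionprocess} and \cref{thm:coveringimplieslearning} turn that success into a \emph{certificate} that $\ket{\psi}$ is $\varepsilon$-close to the output of a depth-$(2k+1)d$, $O(n)$-ancilla circuit, contradicting Case 2; conversely in Case 1 the searches are guaranteed to succeed. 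Your learn-then-verify wrapper replaces this certificate argument with a SWAP test between a freshly prepared copy of $\sigma=\Tr_{\mathrm{anc}}(W\ketbra{0}W^\dag)$ and a copy of $\ket{\psi}$, and your key structural observation --- that the Case 2 promise quantifies over \emph{all} constant-depth $O(n)$-ancilla circuits, so soundness never relies on any off-promise guarantee of the learner --- is exactly right, as is the Fuchs--van de Graaf gap argument (the circuit's depth and ancilla count are fixed by the covering scheme, not by the input state, so the promise indeed applies to whatever $W$ the learner emits). The trade-off: your approach is more modular, working as a black-box wrapper around any learner with a trace-distance guarantee, with no need to peer inside the reconstruction machinery; the paper's approach buys a weaker access model --- the decision is made entirely by classical post-processing of single-copy Pauli measurement data, with no need for a quantum computer to run $W$ or to perform coherent joint measurements on $\sigma\otimes\ketbra{\psi}$, which is what your SWAP test requires.
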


Next we give a brief review of the background and motivations of this work.

\paragraph{Learning phases of matter.} Quantum systems defined on finite-dimensional lattices are a central subject in condensed matter physics, where quantum states are classified into different ``phases of matter''~\cite{Chen2010Local}. Quantum circuit complexity plays an important role in the definition of phases of matter: the ``trivial'' phase is typically defined as quantum states prepared by a constant (or $\mathrm{polylog}(n)$) depth circuit acting on a $k$-dimensional lattice (e.g.~\cite{aharonov2018quantum,Piroli2021Quantum}), while quantum states in a topologically ordered phase have high circuit complexity~\cite{Bravyi2006LiebRobinson}. Our result therefore shows that:
\begin{itemize}
    \item Quantum states in the ``trivial'' phase can be learned in polynomial (or quasi-polynomial) time.
    \item Given an arbitrary quantum state, there is an efficient algorithm to test whether it is in the ``trivial'' phase or some other high-complexity phase.
\end{itemize}

\paragraph{Quantum algorithms in NISQ.} Shallow quantum circuits provide a natural theoretical model for noisy intermediate-scale quantum (NISQ) computation, and optimism about quantum advantage from NISQ devices relies on the fact that shallow quantum circuits can generate nontrivial probability distributions that are hard to simulate classically~\cite{terhal2004adaptive,gao2017quantum,BermejoVega2018architecture,Hangleiter2023Computational}. This motivates many NISQ algorithms which heuristically use shallow quantum circuits as an ansatz; the goal is to optimize a parameterized shallow quantum circuit to solve interesting problems~\cite{Bharti2022Noisy}. In practice, these algorithms run into issues such as local minima~\cite{anschuetz2022quantum}. Our learning algorithm addresses a simple question in this direction: provably learning a shallow quantum circuit to prepare an unknown quantum state, which potentially provides a useful primitive for new NISQ algorithms.

\paragraph{Other related works.} The classical question of learning low-depth Boolean circuits (e.g. $\NC^0$ and $\AC^0$) is well-studied~\cite{mossel2003learning,linial1993constant,carmosino2016learning}. Recent work~\cite{huang2024learning} also gave an efficient algorithm for learning shallow quantum circuits from random input/output samples, which can be viewed as a quantum analog of learning shallow Boolean circuits. Our problem of learning quantum states prepared by shallow circuits can be viewed as another analog, but here there is no access of ``input'' to the shallow circuit and the learning algorithm can only make measurements to the quantum state.

Separately, developing efficient algorithms to learn interesting families of quantum states is a major direction in quantum learning theory, and we refer to~\cite{anshu2023survey} for a recent survey.

\paragraph{Discussion.} An interesting question is whether our algorithm can be generalized to other geometries (or interaction graphs) beyond finite-dimensional lattices. In fact, we show that our algorithm works for any geometry which has a property called ``covering scheme'' (\cref{def:coveringscheme}), and we construct covering schemes for finite-dimensional lattices. It is interesting to study what other geometries can have this covering scheme.



\paragraph{Note added.} We recently became aware of independent related work of Hyun-Soo Kim, Isaac Kim, and Daniel Ranard~\cite{privatecommunication}, achieving similar results via a different approach.




\section{Learning shallow quantum circuits vs quantum states}
To understand our problem, it is helpful to discuss the closely related problem of learning shallow quantum circuits: given query access to an unknown unitary $U$ that implements a shallow quantum circuit, learn a description of a shallow quantum circuit that is close to $U$. A previous work by us and coauthors gave an efficient algorithm for this problem~\cite{huang2024learning}. The nice thing about this algorithm is that it is very simple and yet works in general. Here's the proof: for any unitary $U$, the following identity holds.
\begin{equation}\label{eq:learningshallowcircuit}
    U\otimes U^\dag = \left(\prod_{i=1}^n S_i\right)\cdot\prod_{i=1}^n \left(U^\dag S_i U\right).
\end{equation}
Here, we have $n$ system qubits and $n$ ancilla qubits, and $S_i$ is the SWAP gate on the $i$-th system qubit and the $i$-th ancilla qubit. Each $U$, $U^\dag$ on RHS acts on the system. To see this identity, we first cancel $U$ with $U^\dag$ in the product, then RHS becomes $U^\dag \left(\prod_{i=1}^n S_i\right) U$. Then, \cref{eq:learningshallowcircuit} follows from this diagram.

\begin{equation}
    \begin{tikzpicture}[baseline={(0, 0.5*0.6*\tikzlength cm-\MathAxis pt)},x=0.6*\tikzlength cm,y=0.6*\tikzlength cm]
        \draw[black] (0.1,0) rectangle node{$U$} (2.9,1);
        \draw[black] (3.1,0) rectangle node{$U^\dag$} (5.9,1);
  \foreach \x in {1,2,...,6} {
        \draw[black] (\x-0.5, 0) -- (\x-0.5,-0.5);
        \draw[black] (\x-0.5, 1) -- (\x-0.5,1.5);
    }
    \end{tikzpicture}
\quad=\quad
\begin{tikzpicture}[baseline={(0, 0.5*0.6*\tikzlength cm-\MathAxis pt)},x=0.6*\tikzlength cm,y=0.6*\tikzlength cm]
        \draw[black] (0.1,0) rectangle node{$U^\dag$} (2.9,1);
        \draw[black] (0.1,-1) rectangle node{$U$} (2.9,-2);
  \foreach \x in {1,2,3} {
        \draw[black] (\x+2.5, 1) -- (\x+2.5,0);
        \draw[black] (\x+2.5, -1) -- (\x+2.5,-2.5);
        \draw[black] (\x-0.5, -2) -- (\x-0.5,-2.5);
    }

\foreach \x in {1,2,3} {
        \draw[black] (\x-0.5,1)..controls (\x-0.5,1.5) and (\x+2.5,1.5)..(\x+2.5,2);
        \draw[black] (\x-0.5,2)..controls (\x-0.5,1.5) and (\x+2.5,1.5)..(\x+2.5,1);
    }
    \foreach \x in {1,2,3} {
        \draw[black] (\x-0.5,-1)..controls (\x-0.5,-0.5) and (\x+2.5,-0.5)..(\x+2.5,0);
        \draw[black] (\x-0.5,0)..controls (\x-0.5,-0.5) and (\x+2.5,-0.5)..(\x+2.5,-1);
    }
    \end{tikzpicture}
\end{equation}
The learning algorithm directly follows from \cref{eq:learningshallowcircuit}: when $U$ is a shallow circuit, each operator $U^\dag S_i U$ is local and therefore easy to learn. Moreover, they also commute with each other, so we just learn each of them, multiply together arbitrarily, and we have learned a circuit that implements $U\otimes U^\dag$. If we arrange the ordering a bit smarter, we have a shallow circuit that implements $U\otimes U^\dag$. That is, the learned circuit is a shallow quantum circuit acting on $2n$ qubits, and it implements a unitary that is close to $U\otimes U^\dag$ in diamond distance.

So what about the current problem of learning quantum states prepared by shallow circuits? Clearly, what we want to achieve is to find a similarly simple technique that works in general. However, that turned out to be not so easy, because there are fundamental distinctions between the two problems. On a high level, the problem of learning shallow circuits has stronger access (we can query the unknown unitary with desired input and measure the output in a desired basis), and also has stronger requirement (we need to learn the circuit, instead of preparing a specific state), and seems incomparable with learning quantum states prepared by shallow circuits. 

But if one thinks harder, one starts to realize that the stronger access really makes the problem much easier. The structure demonstrated in \cref{eq:learningshallowcircuit} is really special: a shallow quantum circuit can be directly decomposed into a product of commuting local unitaries, which are also local observables that are easy to learn. This only exists because we have access to both the input and the output. But we are just given copies of a specific quantum state for the current problem, and the dream would be to find a simple and efficient procedure to reconstruct the state from local observables (or local reduced density matrices). However, we cannot hope to get something really similar to \cref{eq:learningshallowcircuit}: in general, there is no way to directly decompose a quantum state into reduced density matrices. Luckily, it turns out that there is a simple technique that works in general for our problem, which is presented in this work.


\paragraph{Our approach.} We start by introducing a new framework to reconstruct the state. Let $\ket{\psi}$ be an unknown quantum state prepared by an unknown shallow circuit. Our goal will be to learn \emph{local CPTP maps} $\mathcal{R}_1,\mathcal{R}_2,\dots,\mathcal{R}_L$ which fix the state, i.e., $\mathcal{R}_i(\ketbra{\psi}) = \ketbra{\psi}, \forall i\in[L]$. These maps are easy to obtain from local reduced density matrices. By definition, we have
\begin{equation}\label{eq:projection}
    \mathcal{R}_L\circ\mathcal{R}_{L-1}\cdots \circ\mathcal{R}_1(\ketbra{\psi}) = \ketbra{\psi}.
\end{equation}
Here's our plan. Note that in \cref{eq:projection}, in LHS we have an unknown input state $\ket{\psi}$, a sequence of known (learned) maps, and the output equals $\ket{\psi}$. We will show that when these local maps are chosen in a very careful way, the output in fact does not depend on the input. That is, $\mathcal{R}_L\circ\mathcal{R}_{L-1}\cdots \circ\mathcal{R}_1(\ketbra{\phi}) = \ketbra{\psi}$ for any state $\ket{\phi}$. This shows that we can already reconstruct $\ket{\psi}$ using these local maps. Moreover, we further show that we can \emph{directly read out} a shallow unitary quantum circuit $C$ from these maps that prepares the state within small trace distance, that is, 
\begin{equation}
    C\ket{0^n}\otimes \ket{0^m}\approx\ket{\psi}\otimes\ket{\mathrm{junk}},
\end{equation}
which uses $m$ ancilla qubits to prepare $\ket{\psi}$. Interestingly, unlike \cref{eq:learningshallowcircuit} which uses \emph{exactly} $n$ ancilla qubits, here $m$ is a tunable parameter which can be a small fraction of $n$.


\section{Learning algorithm}
\label{sec:learningalgorithm}

\paragraph{Notations.} Our goal is to learn a quantum state $\ket{\psi}$, with the promise that $\ket{\psi}=U\ket{0^n}$ where $U$ is an unknown depth-$d$ circuit acting on a $k$-dimensional lattice. We do not assume knowledge of the circuit architecture: each layer of the circuit consists of non-overlapping 2-qubit gates, where each qubit could interact with any of its neighbors. The following concepts are used throughout the argument.

\begin{itemize}
    \item Ball: $\mc B(A,r)$ denotes the radius-$r$ neighborhood on the lattice for a set of vertices $A$ (including $A$). For example, the dotted region in \cref{fig:2dinversion} shows a ball around region $A$ on the 2D lattice.
    \item Lightcone: $\mc L(A,d)$ denotes the volume of locations that can be reached by a depth $d$ circuit starting from region $A$. In particular, the support of the lightcone at top layer equals $\mc B(A,d)$. For example, the green region in~\cref{fig:1dinversion} denotes the lightcone of the leftmost qubit for a circuit defined on 1D lattice.
\end{itemize}
A lightcone can be viewed as propagating or spreading causal influence in a circuit from input to output. Later we will define a dual notion of \emph{backward} lightcone, which spreads from output to input.

\begin{figure}[t]
    \centering
\begin{subfigure}[b]{0.4\textwidth}
\centering
    \begin{equation*}
    \begin{tikzpicture}[baseline={(0, 0.5*\tikzlength cm-\MathAxis pt)},x=\tikzlength cm,y=\tikzlength cm]
    \fill[ForestGreen!30!white] (0,0) -- (1,0) -- (2,1) -- (0,1) -- (0,0);
        \draw[black] (0,0) rectangle node{$U$} (6,1);
  \foreach \x in {1,2,...,6} {
        \draw[black] (\x-0.5, 0) -- (\x-0.5,-0.4);
        \draw[black] (\x-0.5, 1) -- (\x-0.5,1.4);
        \node at (\x-0.5,-0.4)[anchor=north]{$\ket{0}$};
    }
    
    \draw[black] (0,1.4) -- (2,1.4) -- (1,2.4) -- (0,2.4) -- (0,1.4);
    \node at (0.7,1.9){$V_1$};
    \draw[black] (0.5,2.4) -- (0.5,2.8);
    \draw[black] (1.5,1.9) -- (1.5,2.8);
    \end{tikzpicture}\,\,=\,\, \ket{0}\otimes\ket{\phi}
\end{equation*}
    \caption{Local inversion in 1D}
    \label{fig:1dinversion}
    \end{subfigure}
\quad\quad
    \begin{subfigure}[b]{0.4\textwidth}
\centering
    \begin{tikzpicture}[baseline={(0, 4*\tikzlength cm-\MathAxis pt)},x=0.8*\tikzlength cm,y=0.8*\tikzlength cm]
    \fill[blue!30!white] (3,3) rectangle (5,5);
    \draw[black] (3,3) rectangle (5,5);
    \draw[black,densely dashed] (2.5,3) -- (3,2.5) -- (5,2.5) -- (5.5,3) -- (5.5,5) -- (5,5.5) -- (3,5.5) -- (2.5,5) -- (2.5,3);
    \draw[black] (0,0) rectangle (8,8);

\node at (2.75,4) {$B$};
\node at (4,4) {$A$};
\node at (4,6.5) {$C$};
\end{tikzpicture}
    \caption{Local inversion in 2D}
    \label{fig:2dinversion}
    \end{subfigure}

    \caption{Local inversions for quantum states prepared by shallow circuits. (a) 1D lattice; (b) 2D lattice, where a local inversion of $A$ can be constructed by applying a depth-$d$ circuit on $AB$.}
    \label{fig:localinversion}
\end{figure}
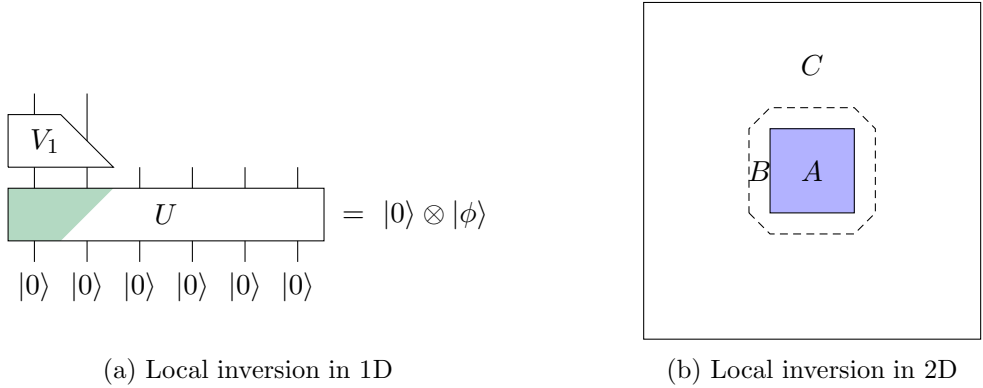

\subsection{Overview of technical challenges and new ideas}
We first give an overview of the technical challenges in developing a learning algorithm, and our new ideas to overcome these challenges. We start by defining a key concept of local inversion.

\begin{dfn}[Local inversion]\label{def:localinv}
Given a state $\ket{\psi}$ and a subset $A\subseteq [n]$ of qubits, a unitary operator $V$ is called a \emph{local inversion} of $A$ if $V$ acts on a ball of $A$ and $V\ket{\psi}=\ket{0}_A\otimes \ket{\phi}$ for some arbitrary pure state $\ket{\phi}$ on $n-|A|$ qubits.
\end{dfn}

\begin{fact}\label{fact:localinv}
    Suppose $\ket{\psi}=U\ket{0^n}$ where $U$ is a depth-$d$ circuit acting on a $k$-dimensional lattice. Then for any $A\subseteq [n]$, there exists a local inversion $V$ of $A$ satisfying:
    \begin{enumerate}
        \item $V$ is supported on $\mc B(A,d)$;
        \item $V$ is a depth $d$ circuit, whose inverse shape is contained within the lightcone $\mc L(A,d)$.
    \end{enumerate}
\end{fact}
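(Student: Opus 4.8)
The plan is to exhibit the local inversion $V$ explicitly as the inverse of the \emph{forward lightcone} of $A$ sitting inside the circuit $U$, and then to verify by a commutation argument that applying it to $\ket{\psi}$ resets the qubits of $A$ to $\ket{0}$. Concretely, write $U=U_d\cdots U_1$ as a product of $d$ layers of disjoint $2$-qubit gates. Treating $A$ as a set of positions at the input, I define the lightcone region layer by layer: set $R_0=A$, and for $i=1,\dots,d$ call a gate of layer $i$ a \emph{lightcone gate} if it touches $R_{i-1}$, letting $R_i$ be the union of $R_{i-1}$ with the qubits of the lightcone gates of layer $i$. Let $F_i$ be the product of the lightcone gates in layer $i$ and $G_i$ the product of the remaining (non-lightcone) gates of layer $i$, so that $U_i=F_iG_i$ with $F_i,G_i$ acting on disjoint qubits. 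The candidate inversion is $V=W^\dagger$ with $W=F_d\cdots F_1$, i.e. the inverse of the sub-circuit of $U$ supported on the forward lightcone of $A$.

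The main step is the operator identity $W^\dagger U=G_d G_{d-1}\cdots G_1$. The key structural observation is that the non-lightcone gates of layer $i$ are disjoint not only from $R_{i-1}$ (by definition) but from all of $R_i$, since within a single layer the lightcone and non-lightcone gates occupy disjoint qubits. Because the regions are nested, $R_1\subseteq\cdots\subseteq R_d$, each $G_i$ is disjoint from $\mathrm{supp}(F_j)\subseteq R_j\subseteq R_i$ for every $j\le i$, so $G_i$ commutes with all lower-layer lightcone gates $F_j$. I would then expand $W^\dagger U=(F_1^\dagger\cdots F_d^\dagger)(F_dG_d)\cdots(F_1G_1)$ and telescope from the top: cancel $F_d^\dagger F_d$, commute the freed $G_d$ leftward past $F_{d-1}^\dagger,\dots,F_1^\dagger$, cancel $F_{d-1}^\dagger F_{d-1}$, and repeat. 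Each cancellation is licensed precisely by the disjointness above, and after $d$ rounds every $F$ has been removed, leaving $W^\dagger U=G_d\cdots G_1$.

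With this identity the conclusion is immediate. Every $G_i$ is disjoint from $R_i\supseteq A$, so the residual circuit $G_d\cdots G_1$ acts as the identity on $A$; hence $V\ket{\psi}=W^\dagger U\ket{0^n}=G_d\cdots G_1\ket{0^n}$ leaves the positions of $A$ in their input state, giving $V\ket{\psi}=\ket{0}_A\otimes\ket{\phi}$ with $\ket{\phi}$ the residual state on the remaining qubits. For the two stated properties: the support of $V$ is $R_d$, and since each lightcone gate connects $R_{i-1}$ to a neighbor we have $R_i\subseteq\mc B(R_{i-1},1)$ and therefore $R_d\subseteq\mc B(A,d)$, establishing item~1; and $V=W^\dagger$ is a product of the $d$ layers $F_i^\dagger$, so it is a depth-$d$ circuit whose inverse $W$ is by construction the sub-circuit living on the forward lightcone, i.e. contained in $\mc L(A,d)$, establishing item~2.

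I expect the main obstacle to be conceptual rather than computational: recognizing that the correct object to invert is the \emph{forward} lightcone of $A$ (the gates reached by propagating from the positions of $A$ up through the circuit), rather than the backward lightcone that merely determines the reduced density matrix $\rho_A$. A naive attempt to undo only the gates that directly feed into $A$ fails, because entanglement created among the spreading qubits at higher layers is never removed; the forward-lightcone choice is exactly what makes the telescoping cancellation close and simultaneously keeps the support inside $\mc B(A,d)$.
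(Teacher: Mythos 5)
Your proposal is correct and takes essentially the same approach as the paper: the paper's one-line proof ("undo the gates in the lightcone of $A$", illustrated by its 1D figure) is exactly your construction, with your layer decomposition $U_i=F_iG_i$, the telescoping identity $W^\dagger U=G_d\cdots G_1$, and the resulting support/shape bounds supplying the rigorous detail the paper leaves implicit. Your closing remark is also consistent with the paper's conventions, since its $\mc L(A,d)$ is indeed the forward lightcone spreading from $A$, and the learned $V$ has the inverse of that shape.
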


This fact follows from \emph{undoing} the gates in the lightcone of $A$. For example, \cref{fig:1dinversion} shows a state prepared by a circuit acting on a 1D lattice, and the green region denotes the lightcone of the leftmost qubit. There exists a local inversion $V_1$ (for example, by inverting the gates in the lightcone) which disentangles the leftmost qubit. Note that the shape of $V_1$ is the inverse of the shape of the lightcone. \cref{fig:2dinversion} shows a 2D lattice, where the dotted region $AB$ equals $\mc B(A,d)$. There exists a local inversion of $A$ which is a depth-$d$ circuit acting on $AB$.

\paragraph{Learning local inversions.} Local inversions provide the basic tool for a learning algorithm because they are easy to learn. For example, consider a constant-size region $A$ in~\cref{fig:2dinversion}. We first perform quantum state tomography to learn the reduced density matrix on $AB$, and then brute force search over all depth-$d$ circuits acting on $AB$. For each circuit, we can efficiently check whether it correctly inverts the region $A$ to $\ket{0}_A$. In this way we can find possibly a set of local inversion operators of $A$.

For the remainder of~\cref{sec:learningalgorithm}, we will assume that we have access to (exact) local inversion operators. In reality, we can only learn \emph{approximate} local inversion operators. This issue is addressed in~\cref{sec:complexity}.

Our algorithm has a simple framework: (1) Quantum learning: learn local reduced density matrices; (2) Classical processing: find local inversion operators for local regions, and combine them into a circuit. Below we review the challenges in realizing this framework.



\paragraph{Challenges.} The first issue is that there are multiple local inversion operators for a given local region. A naive approach to find a circuit is the following: pick a local inversion operator for each local region, such that local inversions for neighboring regions are \emph{consistent}. Consistency demands that two quantum circuits share the same gates where they overlap, so that they can be merged together as a bigger circuit. This is precisely a constraint satisfaction problem that is hard in general in 2D and higher dimensions.

Recent work~\cite{huang2024learning} addresses this problem in 2D, by showing that one can efficiently solve such a constraint satisfaction problem in 1D, to find 1D circuits that disentangle the 2D state into many 1D stripes~\cite[Eq.~(9)]{huang2024learning}, and argue that the remaining 1D stripes are easy to learn. This approach fails at three and higher dimensions as the disentangling step still requires solving a hard constraint satisfaction problem.

The key challenge in solving the learning problem on finite dimensional lattices is to find a way to avoid any constraint satisfaction problem. In this paper we give such a way.

Here is a basic idea: by definition, applying a local inversion $V$ to a state $\ket{\psi}$ gives $V\ket{\psi}=\ket{0}_A\otimes \ket{\phi}$, where $\ket{\phi}$ is a \emph{smaller} state on $n-|A|$ qubits. Can we repeat the process by further removing qubits from $\ket{\phi}$? The issue here is that now the state has been \emph{disturbed}. In particular, the local inversion $V$ we applied may not be the ``ground truth'' which undoes the gates in the lightcone of $A$, and $V$ may just be an arbitray depth-$d$ circuit that happens to invert $A$. Therefore the state $\ket{\phi}$ suffers from a potential quantum circuit complexity blow-up: we no longer have the guarantee that $\ket{\phi}$ is prepared by a depth-$d$ circuit. Repeating this process will further increase the blow-up.


\paragraph{Key ideas.} Overcoming these obstacles requires two insights. The first is to observe that we can \emph{undo} the local inversion after applying it, so the state $\ket{\psi}$ is not disturbed. The utility of this observation is that in rewriting the state in this way, we have replaced part of the unknown state with a partial piece of known operations. The second insight is to realize that with a careful choice of local inversions based on the geometry of the lattice, these partial pieces can be layered together in a way that the \emph{backward} lightcone of the final state is covered completely by these partial pieces and does not depend on the unknown initial state. The result is the learned constant depth circuit consisting of parts of local inversions and their inverses.

\subsection{Replacement process}
As discussed above, our first key idea is the following, which at first glance seems useless: apply a local inversion, and then undo it. We formally define this as a replacement process.

\begin{dfn}[Replacement process]
\label{def:replacementprocess}
    Given a state $\ket{\psi}$ and a region $A$, define the {\em $A$-replacement process} as follows: take a local inversion $V$ of region $A$, then perform the following operations on $\ket{\psi}$:
    \begin{enumerate}
    \item apply $V$,
    \item trace out the qubits in $A$, replace each qubit with $\ket{0}$,
    \item apply $V^\dag$.
    \end{enumerate}
\end{dfn}

Note that step 2 is in fact not doing anything (since step 1 already inverted the region $A$ to $\ket{0}_A$) and is included here to help the illustration of the argument. Next, since step 2 is effectively the identity operation, step 1 and step 3 cancel with each other, and therefore the state $\ket{\psi}$ remains unchanged. This is illustrated as follows (for simplicity, here we draw local inversions as boxes without the wedges).

\begin{equation}\label{eq:replacementprocess}
    \ket{\psi}\,\,=\,\,\begin{tikzpicture}[baseline={(0, 0.5*\tikzlength cm-\MathAxis pt)},x=0.8*\tikzlength cm,y=0.8*\tikzlength cm]

\fill[ForestGreen!30!white] (2.3,4.0) -- (3.7,4.0) -- (4.0,3.3) -- (2.0,3.3);

        \draw[black] (0,0) rectangle node{\small{$\ket{\psi}$}} (24,0.7);
  \foreach \x in {1,2,...,24} {
        \draw[black] (\x-0.5, 0.7) -- (\x-0.5,1.1);
    }

    \draw[black] (0,1.1) rectangle (5.9,1.8);
    \node at (3,1.45){\small{$V$}};
    \foreach \x in {2,...,5} {
        \draw[black] (\x-0.5, 1.8) -- (\x-0.5,2.2);
        \node[anchor=south] at (\x-0.5, 2){\small{$\ket{0}$}};
        \draw[black] (\x-0.5, 3.3) -- (\x-0.5,2.9);
    }
    \foreach \x in {1,6} {
        \draw[black] (\x-0.5, 1.8) -- (\x-0.5,3.3);
    }
    \foreach \x in {1,...,6} {
        \draw[black] (\x-0.5, 4.0) -- (\x-0.5,4.4);
    }
    \draw[black] (0,3.3) rectangle node {\small{$V^\dag$}} (5.9,4.0) ;

    \draw [decorate,
    decoration = {brace}] (4.8,-0.1) -- (1.2,-0.1) node[pos=0.5,anchor=north]{$A$};
    \draw [decorate,
    decoration = {brace}] (2.3,4.4) -- (3.7,4.4) node[pos=0.5,anchor=south]{$A_0$};
    \end{tikzpicture}
\end{equation}

\begin{fact}\label{fact:invariant}
The state $\ket{\psi}$ is invariant under any $A$-replacement process for any region $A$.
\end{fact}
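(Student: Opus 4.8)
The plan is to verify the claim directly by tracking the density matrix $\ketbra{\psi}$ through the three steps of the replacement process and invoking the defining property of a local inversion. The only input needed is \cref{def:localinv}: since $V$ is a local inversion of $A$, we have $V\ket{\psi} = \ket{0}_A \otimes \ket{\phi}$ for some pure state $\ket{\phi}$ on the remaining $n-|A|$ qubits, and equivalently $V\ketbra{\psi}V^\dag = \ketbra{0}_A \otimes \ketbra{\phi}$.

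First I would apply step 1 to the input $\ketbra{\psi}$, obtaining $V\ketbra{\psi}V^\dag = \ketbra{0}_A \otimes \ketbra{\phi}$. The crucial structural feature here is that the qubits in $A$ are now in the \emph{pure product state} $\ket{0}_A$, completely unentangled from the complement. Next I would apply step 2 to this state: tracing out $A$ yields $\ketbra{\phi}$ on the complement, and reinitializing the traced-out qubits to $\ket{0}$ restores exactly $\ketbra{0}_A \otimes \ketbra{\phi}$. Hence step 2 acts as the identity on the post-step-1 state. Finally, step 3 applies $V^\dag$, giving $V^\dag(\ketbra{0}_A \otimes \ketbra{\phi})V = V^\dag V \ketbra{\psi} V^\dag V = \ketbra{\psi}$, which is the claim.

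The one point that deserves care — and is really the entire content of the fact — is the verification that step 2 is the identity on the state produced by step 1. This hinges entirely on the product structure $\ket{0}_A \otimes \ket{\phi}$ guaranteed by $V$ being a genuine local inversion: it is precisely the absence of any entanglement across the cut that makes ``trace out $A$ and reinitialize to $\ket{0}$'' a no-op. Were $V$ merely some depth-$d$ circuit that did not fully disentangle $A$ into $\ket{0}_A$, step 2 would generically destroy information and the process would not fix $\ket{\psi}$. I therefore expect no real obstacle; the statement is essentially the observation, already flagged in the remark following \cref{def:replacementprocess}, that $V$ and $V^\dag$ cancel because the intervening reinitialization does nothing.
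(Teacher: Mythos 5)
Your proof is correct and matches the paper's own argument: the paper likewise observes that after step 1 the state is $\ket{0}_A\otimes\ket{\phi}$, so the trace-and-reinitialize in step 2 is the identity, and then $V^\dag$ cancels $V$. Your density-matrix write-up merely makes explicit the cancellation the paper states informally in the remark following \cref{def:replacementprocess}.
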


Next we give an intuitive explanation of why the replacement process may be helpful for learning. First, we informally introduce the new concept of backward lightcone (green region in \cref{eq:replacementprocess}). The formal definition is given in \cref{def:backwardlightconeformal}. 

\begin{dfn}[Backward lightcone, informal]
    The backward lightcone of a subset of qubits $S$ at the output of a quantum circuit is the minimal part of the circuit diagram that determines the reduced density matrix of $S$.
\end{dfn}
For example, in \cref{eq:replacementprocess} the green region starts from $A_0$ at the output of the circuit and keeps growing backwards (which looks like the inverse of a (forward) lightcone), until it hits a region of $\ket{0}$. There is no need to grow further because the input is completely determined. Now, note that the reduced density matrix of $A_0$ at the output of the circuit (which equals the reduced density matrix of $A_0$ in $\ket{\psi}$) is determined by its backward lightcone: start from a region of $\ket{0}$ which is larger than $A_0$, apply the green circuit, and trace out the qubits not in $A_0$. All quantum gates not in the green region are irrelevant since removing them does not affect the reduced density matrix of $A_0$.

Here is an interesting observation about the $A$-replacement process: suppose we choose $A$ to be a (sufficiently large) ball of some smaller region $A_0$ as in \cref{eq:replacementprocess}, then the backward lightcone of $A_0$ ends at the freshly initialized qubits in step 2 of \cref{def:replacementprocess}. In particular, the backward lightcone does not reach the unknown state $\ket{\psi}$, which allows us to reconstruct the reduced density matrix of $\ket{\psi}$ on $A_0$ by a \emph{known} circuit. And yet, due to the invariance of $\ket{\psi}$ (\cref{fact:invariant}) we can pretend that nothing has happened to $\ket{\psi}$ and repeat this process. In other words,

\begin{quote}
    \emph{Key observation: We have replaced part of the state with known operations, without disturbing the state.}
\end{quote}

This observation suggests an approach to learning a circuit for $\ket{\psi}$: repeatedly apply $A$-replacement processes for different small regions $A$, and hope to have the backward lightcone of more and more output qubits be contained entirely within the replacement process, and hope that eventually this holds true for all of the output qubits. If we can manage this, it means we must have generated $\ket{\psi}$ solely from the collection of replacement processes (which are constructed by quantum circuits that are known to us) and thus we can extract from them a circuit that can generate $\ket{\psi}$, simply via the backward lightcone of all output qubits. 

It is not obvious that this can work, because we need to apply replacement processes not only in parallel, but also on top of each other, since a single layer cannot cover all output qubits (e.g.~\cref{fig:1d_rec_process}). The issue is that applying replacement processes on top of each other changes the lightcone structure: for example, the backward lightcone of $A_0$ may be much larger than in \cref{eq:replacementprocess} if there are additional layers on top, because the backward lightcone starts from the output which is at the very top of the circuit.

In the next section we pin down the exact conditions for this approach to work.


\subsection{Covering schemes and reconstruction circuits}

It turns out that we can make this approach work if we can find a collection of small regions that satisfy the following conditions which we call a covering scheme. Our plan is:
\begin{enumerate}
    \item In this section we show that a covering scheme implies a learning algorithm.
    \item In \cref{sec:latticecoveringschemes} we show how to construct good covering schemes for $k$-dimensional lattices.
\end{enumerate}

\begin{dfn}[Covering scheme]
\label{def:coveringscheme}
A $(\ell,c,d)$ covering scheme is a collection of subsets of qubits $S_j^i\subseteq [n]$
\begin{equation*}
    S^1_1, S^1_2, \dots, S^1_{m_1}, S^2_1, S^2_2, \dots, S^2_{m_2}, \dots, S^\ell_1, S^\ell_2, \dots, S^\ell_{m_\ell}
\end{equation*}
which satisfy the following conditions.
\begin{enumerate}
\item The size of each $\mc B(S^i_j,d)$ is upper bounded by $c$.
\item \label{c:2} For every fixed $i$, the sets $\mc B(S_j^{i},d)$ are pairwise disjoint for $1\leq j \leq m_i$.
\item  For each qubit $v\in [n]$, there exists a $S_j^i$ such that $\mc B(\{v\},(2\ell-1)d)\subseteq S_j^i$.
\end{enumerate}
\end{dfn}

We can think of these subsets as being divided into $\ell$ different layers: in each layer $1\leq i\leq \ell$, there are subsets $S^i_1, S^i_2, \dots, S^i_{m_i}$. Condition 1 says that each of them is small (even after being enlarged by a radius of $d$). Condition 2 says that the subsets in the same layer are disjoint, even after each of them is enlarged by a radius of $d$. Condition 3 says that for each qubit, a ball around that qubit (of radius $(2\ell-1)d$) is entirely contained within some subset.

\subsubsection{Examples in 1D}
The example below shows a covering scheme in 1D with $\ell=2$ layers.
\begin{equation}
    \begin{tikzpicture}[baseline={(0, 0.5*\tikzlength cm-\MathAxis pt)},x=0.8*\tikzlength cm,y=0.8*\tikzlength cm]
        \draw[black] (0,0) rectangle node{\small{$\ket{\psi}$}} (24,0.7);
  \foreach \x in {1,2,...,24} {
        \draw[black] (\x-0.5, 0.7) -- (\x-0.5,1.1);
    }

    \foreach \t in {1,...,4} {
    \draw [decorate,
    decoration = {brace}] (1+6*\t-6,1.1) -- (5+6*\t-6,1.1)  node[pos=0.5,anchor=south]{$S_{\t}^1$};
    }
    \foreach \t in {1,...,3} {
    \draw [decorate,
    decoration = {brace}] (6*\t-6+4, 1.5) -- (6*\t+2,1.5) node[pos=0.5,anchor=south]{$S_{\pgfmathparse{\t+1}\pgfmathprintnumber{\pgfmathresult}}^2$};
    }
    \draw [decorate,
    decoration = {brace}] (0, 1.5) -- (2,1.5) node[pos=0.5,anchor=south]{$S_{1}^2$};
    \draw [decorate,
    decoration = {brace}] (22, 1.5) -- (24,1.5) node[pos=0.5,anchor=south]{$S_{5}^2$};
\end{tikzpicture}
\end{equation}

Note that in each layer, the sets have some distance between each other, because we require that they remain disjoint even after being enlarged (Condition 2). In addition, the first layer and the second layer have a lot of overlap. This ensures that any qubit must lie within the interior of some set, which implies Condition 3.

Why is this useful? Recall that our idea is to repeatedly apply replacement processes. The reason to introduce covering schemes is the following:
\begin{quote}
    \emph{Key idea: Applying replacement processes for a covering scheme allows us to reconstruct the entire state.}
\end{quote}
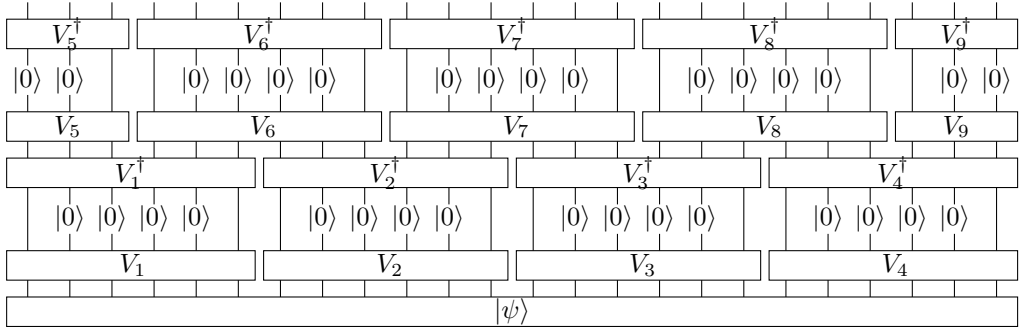
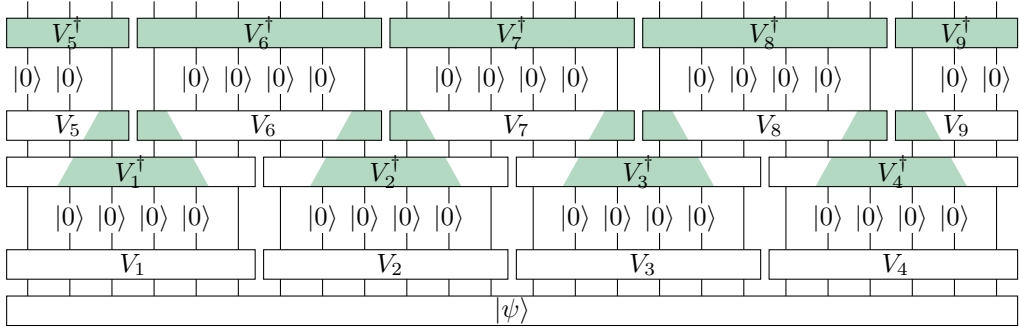
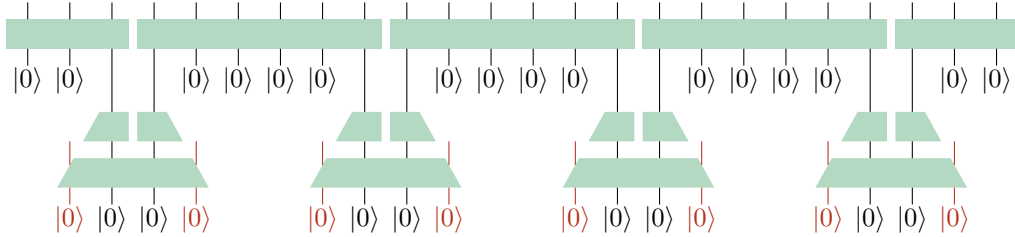
\begin{figure}[t]
    \centering
    \begin{subfigure}[b]{\textwidth}
    \centering
    \begin{tikzpicture}[baseline={(0, 0.5*\tikzlength cm-\MathAxis pt)},x=0.8*\tikzlength cm,y=0.8*\tikzlength cm]
        \draw[black] (0,0) rectangle node{\small{$\ket{\psi}$}} (24,0.7);
  \foreach \x in {1,2,...,24} {
        \draw[black] (\x-0.5, 0.7) -- (\x-0.5,1.1);
    }
\foreach \t in {1,...,4} {
    \ifthenelse{\t = 1}{\def\xmin{0};}{\def\xmin{0.1 + 6*\t-6};}
    \ifthenelse{\t = 4}{\def\xmax{24};}{\def\xmax{6*\t - 0.1};}
    \draw[black] (\xmin,1.1) rectangle (\xmax,1.8);
    \node at (\xmin + 3,1.45){\small{$V_{\t}$}};
    \foreach \x in {2,...,5} {
        \draw[black] (6*\t-6+\x-0.5, 1.8) -- (6*\t-6+\x-0.5,2.2);
        \node[anchor=south] at (6*\t-6+\x-0.5, 2){\small{$\ket{0}$}};
        \draw[black] (6*\t-6+\x-0.5, 3.3) -- (6*\t-6+\x-0.5,2.9);
    }
    \foreach \x in {1,6} {
        \draw[black] (6*\t-6+\x-0.5, 1.8) -- (6*\t-6+\x-0.5,3.3);
    }
    \foreach \x in {1,...,6} {
        \draw[black] (6*\t-6+\x-0.5, 4.0) -- (6*\t-6+\x-0.5,4.4);
    }
    \draw[black] (\xmin,3.3) rectangle node {\small{$V_{\t}^\dag$}} (\xmax,4.0);
}
\foreach \t in {1,...,3} {
    \def\xmin{0.1 + 6*\t-3};
    \def\xmax{6*\t - 0.1+3};
    \draw[black] (\xmin,4.4) rectangle (\xmax,5.1);
    \node at (\xmin + 3,4.75){\small{$V_{\pgfmathparse{\t+5}\pgfmathprintnumber{\pgfmathresult}}$}};
    \foreach \x in {2,...,5} {
        \draw[black] (6*\t-6+\x-0.5+3, 5.1) -- (6*\t-6+\x-0.5+3,5.5);
        \node[anchor=south] at (6*\t-6+\x-0.5+3, 5.3){\small{$\ket{0}$}};
        \draw[black] (6*\t-6+\x-0.5+3, 6.2) -- (6*\t-6+\x-0.5+3,6.6);
    }
    \foreach \x in {1,6} {
        \draw[black] (6*\t-6+\x-0.5+3, 5.1) -- (6*\t-6+\x-0.5+3,6.6);
    }
    \foreach \x in {1,...,6} {
        \draw[black] (6*\t-6+\x-0.5+3, 7.3) -- (6*\t-6+\x-0.5+3,7.7);
    }
    \draw[black] (\xmin,6.6) rectangle node {\small{$V_{\pgfmathparse{\t+5}\pgfmathprintnumber{\pgfmathresult}}^\dag$}} (\xmax,7.3);
}

    \draw[black] (0,4.4) rectangle node {\small{$V_5$}} (2.9,5.1);
    \foreach \x in {1,2} {
        \draw[black] (\x-0.5, 5.1) -- (\x-0.5,5.5);
        \node[anchor=south] at (\x-0.5, 5.3){\small{$\ket{0}$}};
        \draw[black] (\x-0.5, 6.2) -- (\x-0.5,6.6);
    }
    \draw[black] (2.5, 5.1) -- (2.5,6.6);
    \foreach \x in {1,...,3} {
        \draw[black] (\x-0.5, 7.3) -- (\x-0.5,7.7);
    }
    \draw[black] (0,6.6) rectangle node {\small{$V_5^\dag$}} (2.9,7.3);

    \draw[black] (21.1,4.4) rectangle node {\small{$V_9$}} (24,5.1);
    \draw[black] (21.1,6.6) rectangle node {\small{$V_9^\dag$}} (24,7.3);
    
    \foreach \x in {1,2} {
        \draw[black] (\x-0.5+22, 5.1) -- (\x-0.5+22,5.5);
        \node[anchor=south] at (\x-0.5+22, 5.3){\small{$\ket{0}$}};
        \draw[black] (\x-0.5+22, 6.2) -- (\x-0.5+22,6.6);
    }
    \draw[black] (21.5, 5.1) -- (21.5,6.6);
    \foreach \x in {1,...,3} {
        \draw[black] (\x-0.5+21, 7.3) -- (\x-0.5+21,7.7);
    }
    
\end{tikzpicture}
    \caption{Reconstruction process}
    \label{fig:1d_rec_process}
    \end{subfigure}\\[2mm]
    \begin{subfigure}[b]{\textwidth}
    \centering
    \begin{tikzpicture}[baseline={(0, 0.5*\tikzlength cm-\MathAxis pt)},x=0.8*\tikzlength cm,y=0.8*\tikzlength cm]

\foreach \t in {1,...,3} {
\def\xmin{0.1 + 6*\t-3};
    \def\xmax{6*\t - 0.1+3};
    \fill[ForestGreen!30!white] (\xmin,6.6) rectangle (\xmax,7.3);
}
\fill[ForestGreen!30!white] (21.1,6.6) rectangle (24,7.3);
\fill[ForestGreen!30!white] (0,6.6) rectangle (2.9,7.3);
\foreach \t in {1,...,4} {
\fill[ForestGreen!30!white] (6*\t-6+3-0.8,5.1) -- (6*\t-6+3-0.1,5.1) -- (6*\t-6+3-0.1,4.4) -- (6*\t-6+3-1.1889,4.4);
\fill[ForestGreen!30!white] (6*\t-6+3+0.8,5.1) -- (6*\t-6+3+0.1,5.1) -- (6*\t-6+3+0.1,4.4) -- (6*\t-6+3+1.1889,4.4);
\fill[ForestGreen!30!white] (6*\t-6+3-1.4111,4.0) -- (6*\t-6+3-1.8,3.3) -- (6*\t-6+3+1.8,3.3) -- (6*\t-6+3+1.4111,4.0);
}

        \draw[black] (0,0) rectangle node{\small{$\ket{\psi}$}} (24,0.7);
  \foreach \x in {1,2,...,24} {
        \draw[black] (\x-0.5, 0.7) -- (\x-0.5,1.1);
    }

\foreach \t in {1,...,4} {
    \ifthenelse{\t = 1}{\def\xmin{0};}{\def\xmin{0.1 + 6*\t-6};}
    \ifthenelse{\t = 4}{\def\xmax{24};}{\def\xmax{6*\t - 0.1};}
    \draw[black] (\xmin,1.1) rectangle (\xmax,1.8);
    \node at (\xmin + 3,1.45){\small{$V_{\t}$}};
    \foreach \x in {2,...,5} {
        \draw[black] (6*\t-6+\x-0.5, 1.8) -- (6*\t-6+\x-0.5,2.2);
        \node[anchor=south] at (6*\t-6+\x-0.5, 2){\small{$\ket{0}$}};
        \draw[black] (6*\t-6+\x-0.5, 3.3) -- (6*\t-6+\x-0.5,2.9);
    }
    \foreach \x in {1,6} {
        \draw[black] (6*\t-6+\x-0.5, 1.8) -- (6*\t-6+\x-0.5,3.3);
    }
    \foreach \x in {1,...,6} {
        \draw[black] (6*\t-6+\x-0.5, 4.0) -- (6*\t-6+\x-0.5,4.4);
    }
    \draw[black] (\xmin,3.3) rectangle node {\small{$V_{\t}^\dag$}} (\xmax,4.0);
}

\foreach \t in {1,...,3} {
    \def\xmin{0.1 + 6*\t-3};
    \def\xmax{6*\t - 0.1+3};
    \draw[black] (\xmin,4.4) rectangle (\xmax,5.1);
    \node at (\xmin + 3,4.75){\small{$V_{\pgfmathparse{\t+5}\pgfmathprintnumber{\pgfmathresult}}$}};
    \foreach \x in {2,...,5} {
        \draw[black] (6*\t-6+\x-0.5+3, 5.1) -- (6*\t-6+\x-0.5+3,5.5);
        \node[anchor=south] at (6*\t-6+\x-0.5+3, 5.3){\small{$\ket{0}$}};
        \draw[black] (6*\t-6+\x-0.5+3, 6.2) -- (6*\t-6+\x-0.5+3,6.6);
    }
    \foreach \x in {1,6} {
        \draw[black] (6*\t-6+\x-0.5+3, 5.1) -- (6*\t-6+\x-0.5+3,6.6);
    }
    \foreach \x in {1,...,6} {
        \draw[black] (6*\t-6+\x-0.5+3, 7.3) -- (6*\t-6+\x-0.5+3,7.7);
    }
    \draw[black] (\xmin,6.6) rectangle node {\small{$V_{\pgfmathparse{\t+5}\pgfmathprintnumber{\pgfmathresult}}^\dag$}} (\xmax,7.3);
}

    \draw[black] (0,4.4) rectangle node {\small{$V_5$}} (2.9,5.1);
    \foreach \x in {1,2} {
        \draw[black] (\x-0.5, 5.1) -- (\x-0.5,5.5);
        \node[anchor=south] at (\x-0.5, 5.3){\small{$\ket{0}$}};
        \draw[black] (\x-0.5, 6.2) -- (\x-0.5,6.6);
    }
    \draw[black] (2.5, 5.1) -- (2.5,6.6);
    \foreach \x in {1,...,3} {
        \draw[black] (\x-0.5, 7.3) -- (\x-0.5,7.7);
    }
    \draw[black] (0,6.6) rectangle node {\small{$V_5^\dag$}} (2.9,7.3);

    \draw[black] (21.1,4.4) rectangle node {\small{$V_9$}} (24,5.1);
    \draw[black] (21.1,6.6) rectangle node {\small{$V_9^\dag$}} (24,7.3);
    
    \foreach \x in {1,2} {
        \draw[black] (\x-0.5+22, 5.1) -- (\x-0.5+22,5.5);
        \node[anchor=south] at (\x-0.5+22, 5.3){\small{$\ket{0}$}};
        \draw[black] (\x-0.5+22, 6.2) -- (\x-0.5+22,6.6);
    }
    \draw[black] (21.5, 5.1) -- (21.5,6.6);
    \foreach \x in {1,...,3} {
        \draw[black] (\x-0.5+21, 7.3) -- (\x-0.5+21,7.7);
    }
    
\end{tikzpicture}
    \caption{Backward lightcone}
    \label{fig:1d_back_lightcone}
    \end{subfigure}\\[2mm]
    \begin{subfigure}[b]{\textwidth}
    \centering
    \begin{tikzpicture}[baseline={(0, 1.8*\tikzlength cm-\MathAxis pt)},x=0.8*\tikzlength cm,y=0.8*\tikzlength cm]

\foreach \t in {1,...,3} {
\def\xmin{0.1 + 6*\t-3};
    \def\xmax{6*\t - 0.1+3};
    \fill[ForestGreen!30!white] (\xmin,6.6) rectangle (\xmax,7.3);
}
\fill[ForestGreen!30!white] (21.1,6.6) rectangle (24,7.3);
\fill[ForestGreen!30!white] (0,6.6) rectangle (2.9,7.3);
\foreach \t in {1,...,4} {
\fill[ForestGreen!30!white] (6*\t-6+3-0.8,5.1) -- (6*\t-6+3-0.1,5.1) -- (6*\t-6+3-0.1,4.4) -- (6*\t-6+3-1.1889,4.4);
\fill[ForestGreen!30!white] (6*\t-6+3+0.8,5.1) -- (6*\t-6+3+0.1,5.1) -- (6*\t-6+3+0.1,4.4) -- (6*\t-6+3+1.1889,4.4);
\fill[ForestGreen!30!white] (6*\t-6+3-1.4111,4.0) -- (6*\t-6+3-1.8,3.3) -- (6*\t-6+3+1.8,3.3) -- (6*\t-6+3+1.4111,4.0);
}


\foreach \t in {1,...,4} {
    \foreach \x in {2,5} {
        \node[anchor=south] at (6*\t-6+\x-0.5, 2){{\color{BrickRed} \small{$\ket{0}$}}};
        \draw[BrickRed] (6*\t-6+\x-0.5, 3.3) -- (6*\t-6+\x-0.5,2.9);
    }
    \foreach \x in {3,4} {
        \node[anchor=south] at (6*\t-6+\x-0.5, 2){\small{$\ket{0}$}};
        \draw[black] (6*\t-6+\x-0.5, 3.3) -- (6*\t-6+\x-0.5,2.9);
    }
    \foreach \x in {2,5} {
        \draw[BrickRed] (6*\t-6+\x-0.5, 3.84039) -- (6*\t-6+\x-0.5,4.4);
    }
    \foreach \x in {3,4} {
        \draw[black] (6*\t-6+\x-0.5, 4.0) -- (6*\t-6+\x-0.5,4.4);
    }
}

\foreach \t in {1,...,3} {
    \def\xmin{0.1 + 6*\t-3};
    \def\xmax{6*\t - 0.1+3};
    \foreach \x in {2,...,5} {
        \node[anchor=south] at (6*\t-6+\x-0.5+3, 5.3){\small{$\ket{0}$}};
        \draw[black] (6*\t-6+\x-0.5+3, 6.2) -- (6*\t-6+\x-0.5+3,6.6);
    }
    \foreach \x in {1,6} {
        \draw[black] (6*\t-6+\x-0.5+3, 5.1) -- (6*\t-6+\x-0.5+3,6.6);
    }
    \foreach \x in {1,...,6} {
        \draw[black] (6*\t-6+\x-0.5+3, 7.3) -- (6*\t-6+\x-0.5+3,7.7);
    }
}

    \foreach \x in {1,2} {
        \node[anchor=south] at (\x-0.5, 5.3){\small{$\ket{0}$}};
        \draw[black] (\x-0.5, 6.2) -- (\x-0.5,6.6);
    }
    \draw[black] (2.5, 5.1) -- (2.5,6.6);
    \foreach \x in {1,...,3} {
        \draw[black] (\x-0.5, 7.3) -- (\x-0.5,7.7);
    }

    
    \foreach \x in {1,2} {
        \node[anchor=south] at (\x-0.5+22, 5.3){\small{$\ket{0}$}};
        \draw[black] (\x-0.5+22, 6.2) -- (\x-0.5+22,6.6);
    }
    \draw[black] (21.5, 5.1) -- (21.5,6.6);
    \foreach \x in {1,...,3} {
        \draw[black] (\x-0.5+21, 7.3) -- (\x-0.5+21,7.7);
    }
    
\end{tikzpicture}
    \caption{Reconstructed circuit}
    \label{fig:1d_rec_circuit}
    \end{subfigure}
    \caption{Illustration of the learning algorithm in 1D.}
    \label{fig:1dreconstruction}
\end{figure}

We first illustrate this idea in 1D in \cref{fig:1dreconstruction}, and then give a formal argument in~\cref{sec:reconstruction}.

\begin{itemize}
    \item \textbf{Reconstruction process.} Suppose we apply $S_1^1$-replacement process, which is supported on $\mc B(S_1^1,d)$ and looks exactly the same as~\cref{eq:replacementprocess}. Note that all replacement processes corresponding to the first layer in the covering scheme can be implemented in parallel, due to Condition 2 in \cref{def:coveringscheme}. Next, we apply all replacement processes corresponding to the second layer in the covering scheme. We call the resulting diagram a reconstruction process, shown in \cref{fig:1d_rec_process}.
    \item \textbf{Backward lightcone.} Now we construct the backward lightcone for all output wires of the reconstruction process (\cref{fig:1d_back_lightcone}). The way to do this is to color all gates at the top layer in green, and then ``spread'' the green color backwards, until hitting a region of $\ket{0}$. Note that some of the spreading stops at the second layer of $\ket{0}$, but inevitably some of the spreading goes beyond the second layer and enters the first layer. However, all of the spreading stops at the first layer of $\ket{0}$ and never touches the bottom unknown state $\ket{\psi}$.
    \item \textbf{Reconstructed circuit.} By \cref{fact:invariant}, the state $\ket{\psi}$ remains invariant under the reconstruction process, and therefore the output state at the top layer equals $\ket{\psi}$. The backward lightcone consists entirely of known quantum circuits, and therefore it gives a reconstructed circuit that prepares $\ket{\psi}$ (\cref{fig:1d_rec_circuit}). This circuit has the following features: it has depth $3d$ and acts on the all-$\ket{0}$ input state, where we view the red qubits as ancilla qubits. After the circuit is applied, the entire state equals $\ket{\psi}\otimes\ket{\mathrm{junk}}$, where the wires at the top correspond to $\ket{\psi}$, and the red wires (ancilla qubits) correspond to $\ket{\mathrm{junk}}$.
\end{itemize}

\subsubsection{Reconstruction theorem}
In this section we give a rigorous argument for how to reconstruct the state in general. We start with a formal definiton of the reconstruction process.

\label{sec:reconstruction}
\begin{dfn}[Reconstruction process]
\label{def:reconstructionprocess}
The reconstruction process for $\ket{\psi}$ is defined as follows. Let $V_{i,j}$ be a local inversion for $S_j^i$. By \cref{fact:localinv}, $V_{i,j}$ is a depth-$d$ circuit acting on $\mc B(S_j^i,d)$. The reconstruction process is defined as follows: 
\begin{quote}
    For each $1\leq i\leq \ell$, apply the $S_j^i$-replacement process using $V_{i,j}$ for all $j$ in parallel.
\end{quote}
More specifically, for each fixed $i$ we do the following: 
\begin{enumerate}
    \item Apply $V_{i,j}$ for all $1\leq j\leq m_i$ in parallel;
    \item Trace out all qubits in $S_j^i$ for each $1\leq j\leq m_i$ and replace with $\ket{0}$;
    \item Apply $V_{i,j}^\dag$ for all $1\leq j\leq m_i$ in parallel.
\end{enumerate}
\end{dfn}

As shown in the 1D example in \cref{fig:1dreconstruction}, a two-step argument is used to show that a circuit for preparing the unknown state $\ket{\psi}$ can be extracted from the reconstruction process:
\begin{enumerate}
    \item The state $\ket{\psi}$ is invariant under each $S_j^i$-replacement process (\cref{fact:invariant}), and therefore is invariant under the entire reconstruction process.
    \item We can reconstruct the output state of the reconstruction process (which equals $\ket{\psi}$) via its backward lightcone, because the backward lightcone consists of known quantum circuits and in particular does not touch the unknown input state $\ket{\psi}$ at the bottom.
\end{enumerate}

Below we elaborate on the second point. At this point a more precise definition of the backward lightcone is needed. Note that this definition needs to be applicable to slightly more general quantum circuits with reset gates.

\begin{dfn}[Backward lightcone]\label{def:backwardlightconeformal}
    Let $\ket{\phi}=W\ket{0^n}$ where $W$ is a quantum circuit consists of 2-qubit unitary gates and 1-qubit reset gates (a reset gate traces out the input and initializes a $\ket{0}$ state). Let $A\subseteq [n]$ be a subset of qubits. The backward lightcone of $A$ is a circuit diagram which is part of the circuit diagram of $W$, defined as the collection of green gates acting on all-0 inputs, constructed as follows:
    \begin{enumerate}
        \item Color the output wires of $W$ corresponding to $A$ in blue.
        \item Repeat the following process until no changes happen:\\
        If there exists a 2-qubit gate $G$ with a blue wire on top, color this gate in green. Moreover, consider each of the two wires at the bottom of $G$. If it is not connected to a reset gate, color the wire in blue.
    \end{enumerate}
In other words, the backward lightcone consists of all quantum gates that could influence the reduced density matrix of $\ketbra{\phi}$ on $A$. The running time to construct the backward lightcone is linear in the size of $W$.
\end{dfn}

The process to construct the backward lightcone can be viewed as spreading the green color from top to bottom, as shown in~\cref{fig:1d_back_lightcone}. The desired property for the backward lightcone, when applied to a reconstruction process, is that it stops entirely at intermediate reset gates and does not reach the bottom.

To help further illustrate this concept, we introduce a dual thought experiment. A different way of formulating our desired property is that we do not want the influence of the bottom input state to reach any top wire. The propagation of the influence can be viewed as forward lightcone spreading of the input state, that is, spreading the white color in~\cref{fig:1d_back_lightcone} from bottom to top. Note that indeed, the white color stops entirely at intermediate reset gates and does not reach the top.

Now we are ready to prove that the desired property is guaranteed by the covering scheme.

\begin{thm}[Covering scheme implies learning algorithm]
\label{thm:coveringimplieslearning}

Suppose $\ket{\psi}=U\ket{0^n}$ where $U$ is a depth-$d$ circuit acting on a $k$-dimensional lattice. Let $(\ell,c,d)$ be a covering scheme for the $k$-dimensional lattice, and let $W$ be the reconstruction process for this covering scheme, which satisfies $W\ket{\psi}=\ket{\psi}$. Then the backward lightcone of all output wires in $W\ket{\psi}$ is a depth-$(2\ell-1)d$ circuit $C$ which is part of $W$. 

The circuit $C$ can be used to prepare $\ket{\psi}$ in the following sense: it acts on $n$ qubits as well as $m=O(n)$ ancilla qubits, and
\begin{equation}\label{eq:tensorproduct}
    C\ket{0^n}\otimes \ket{0^m}=\ket{\psi}\otimes\ket{\mathrm{junk}}.
\end{equation}
\end{thm}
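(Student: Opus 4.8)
The plan is to derive the theorem from two ingredients: the \emph{invariance} $W\ket\psi=\ket\psi$, and a purely structural claim that the backward lightcone of the output never reaches the bottom input wires. Granting both, the conclusion follows quickly. By \cref{def:backwardlightconeformal} the backward lightcone $C$ contains every gate that can influence the reduced density matrix of the output wires, so the full output state of $W$ is already determined by running $C$ on the wires that feed into it from below. If $C$ does not reach the bottom, then every such incoming wire is a freshly initialized $\ket{0}$ produced by a reset gate, so $C$ is a genuine unitary circuit (its only reset gates sit at the bottom boundary, where they become $\ket{0}$ inputs) acting on an all-zero state. Since $W$ fixes $\ket\psi$ on the system wires and the dilated circuit is unitary, its global output is pure with a pure reduced state $\ketbra\psi$ on the system wires, hence equals $\ket\psi\otimes\ket{\mathrm{junk}}$; this gives $C\ket{0^n}\otimes\ket{0^m}=\ket\psi\otimes\ket{\mathrm{junk}}$. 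The number of ancillas $m$ is the number of reset-$\ket0$'s surviving into $C$, at most the total number of resets $O(\ell n)=O(n)$.

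The invariance $W\ket\psi=\ket\psi$ is the easy ingredient: each $V_{i,j}$ satisfies $V_{i,j}\ket\psi=\ket{0}_{S^i_j}\otimes\ket\phi$ by \cref{def:localinv}, so its replacement process fixes $\ket\psi$ (\cref{fact:invariant}); the disjointness of the $\mc B(S^i_j,d)$ within a layer (Condition 2 of \cref{def:coveringscheme}) lets the processes in a layer run in parallel, and composing the $\ell$ layers yields $W\ket\psi=\ket\psi$.

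The crux — and what I expect to be the main obstacle — is the structural claim that the backward lightcone avoids the bottom. I would prove the equivalent dual statement: the forward lightcone of the input $\ket\psi$ reaches no output wire. Fix an input qubit $v$ and bound how far its influence can spread upward. Each layer contributes two depth-$d$ half-circuits ($V_{i,\cdot}$ then $V_{i,\cdot}^\dag$), and reset gates can only \emph{shrink} the influence support, so after passing through layers $1,\dots,i-1$ the support is contained in $\mc B(\{v\},2(i-1)d)$. By Condition 3 of \cref{def:coveringscheme} there is a set $S^{i_0}_{j_0}$ with $\mc B(\{v\},(2\ell-1)d)\subseteq S^{i_0}_{j_0}$; at layer $i_0$, after applying $V_{i_0,\cdot}$ the influence has grown to at most $\mc B(\{v\},(2i_0-1)d)\subseteq\mc B(\{v\},(2\ell-1)d)\subseteq S^{i_0}_{j_0}$, which is entirely inside the region that the layer-$i_0$ reset reinitializes to $\ket{0}$. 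That reset therefore erases $v$'s influence before it reaches $V_{i_0,\cdot}^\dag$, hence before the output. As this holds for every input qubit, the forward lightcone stops at intermediate resets, which is exactly the assertion that the backward lightcone of the output avoids the bottom.

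Finally, the depth bound $(2\ell-1)d$ is a corollary of the same structural claim. A gate in the bottom half-layer $V_{1,\cdot}$ lies directly above the input wires of $\ket\psi$, with no reset beneath layer $1$; if such a gate were green its bottom wires would be colored blue and the backward lightcone would reach the input, contradicting the claim. Thus the backward lightcone is confined to the $2\ell-1$ half-layers above the layer-$1$ reset, namely $V_{1,\cdot}^\dag, V_{2,\cdot}, V_{2,\cdot}^\dag,\dots,V_{\ell,\cdot},V_{\ell,\cdot}^\dag$, each of depth $d$, so $C$ has depth at most $(2\ell-1)d$. I expect the delicate point throughout to be the bookkeeping in the forward-lightcone radius estimate: treating intermediate resets correctly as only shrinking the influence, and checking that the radius $(2\ell-1)d$ built into Condition 3 is exactly what absorbs the $2(i_0-1)d+d$ growth at the erasing layer.
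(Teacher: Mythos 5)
Your proof is correct and is essentially the paper's own argument: the same pessimistic trick of ignoring all reset gates except the single one guaranteed by Condition 3 of \cref{def:coveringscheme}, with the same $(2\ell-1)d$ radius bookkeeping, the same purity/tensor-product conclusion, and the same ancilla count. The only difference is that you run the spreading argument in the forward direction (influence of the bottom input, worst case when the absorbing set lies in the top layer) whereas the paper runs it backward from the output wires (worst case when the absorbing set lies in the bottom layer) --- a dual formulation the paper itself points out is equivalent just before \cref{thm:coveringimplieslearning}.
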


\begin{proof}
    To prove that the backward lightcone of all output wires in $W\ket{\psi}$ is part of $W$ and does not reach the input state $\ket{\psi}$, it suffices to show that the backward lightcone of each individual output wire in $W\ket{\psi}$ must stop at some regions of reset gates in some replacement processes.

    Keeping track of the backward lightcone can be tricky: the backward spreading process can stop at various different layers such as in \cref{fig:1d_back_lightcone}. However, here we give a simple and pessimistic argument which suffices for our purpose.

    Focus on a single output wire $v\in[n]$ of $W\ket{\psi}$ and imagine its backward spreading process. Part of the spreading may stop early at some reset gates, while other parts of the spreading may continue further. Instead of working with $W$, suppose we consider a new quantum circuit $W'$ where all reset gates in $W$ is replaced by the identity gate, with one exception: according to Condition 3 of \cref{def:coveringscheme}, there exists a $S_j^i$ such that $\mc B(\{v\},(2\ell-1)d)\subseteq S_j^i$; we keep the reset gates in $W'$ that correspond to the $S_j^i$-replacement process.
    
    Clearly, the backward spreading process in $W$ is entirely contained within the backward spreading process in $W'$, since $W'$ removed some reset gates relative to $W$ which can only help the spreading. Now, observe the following: \emph{the backward spreading process in $W'$ must stop at the reset gates in the $S_j^i$-replacement process.} This is because $S_j^i$ contains a ball around $v$ with radius $(2\ell-1)d$. By the time the spreading process reaches those reset gates, the process cannot spread further than a distance of $(2\ell-1)d$ and therefore is entirely covered by those reset gates. The number $(2\ell-1)d$ comes from a worst case estimate: suppose $S_j^i$ is within the very bottom layer of the covering scheme, then the spreading process has gone through the top $\ell-1$ layers of depth $2d$ as well as $V_{i,j}^\dag$ of depth $d$, with a total depth of $(2\ell-1)d$. This implies that the backward spreading process of $v$ in $W$ must be contained within $W$: if it cannot reach the bottom even in $W'$, then it also cannot reach the bottom in $W$.

    This concludes the proof that $C$ is a (at most) depth-$(2\ell-1)d$ circuit and is part of $W$. Finally, note that $C$ is a unitary quantum circuit which outputs a pure state. In addition, the reduced density matrix on the $n$ output wires equals $\ketbra{\psi}$. This implies that the system and ancilla must be tensor product pure states as in~\cref{eq:tensorproduct}.
\end{proof}



To conclude this section we give a recap about the role of different parameters of a $(\ell,c,d)$ covering scheme in the learning algorithm.

\begin{itemize}
    \item $d$ is the promised depth of the unknown circuit that prepares $\ket{\psi}$.
    \item $\ell$ determines the depth of the learned circuit, which equals $(2\ell-1)d$.
    \item $c$ determines the running time of the learning algorithm: the algorithm needs to find local inversions acting on a region of size $c$, which takes time exponential in $c$ and $d$.
\end{itemize}
Clearly, these parameters play an important role and are in tension with each other.

\subsection{Good covering schemes}
\label{sec:latticecoveringschemes}

\input{fig_2d}

In this section we construct good covering schemes for $k$-dimensional lattices.

\begin{thm}[Lattice covering scheme]
\label{thm:latticecovering}
    For $k$-dimensional lattice there exists a $\left(k+1,c,d\right)$ covering scheme for any integer $d>0$, where
    \begin{equation}
        c\leq \left((8k^2+14k+2)d\right)^k.
    \end{equation}
\end{thm}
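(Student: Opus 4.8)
The plan is to build the covering scheme from a periodic block decomposition of the lattice, classifying lattice points by how many of their coordinates lie close to a block boundary. Concretely, I would fix a period $L=\Theta(k^2 d)$ and, along each of the $k$ axes, place block boundaries at the multiples of $L$, so that $\mathbb{Z}^k$ is cut into $L\times\cdots\times L$ blocks. For a point $v$ let $r_i(v)$ be the distance from the $i$-th coordinate to the nearest boundary, and define the \emph{level} of $v$ to be the number of coordinates that are ``boundary-like.'' There are $k+1$ possible levels $0,1,\dots,k$, and I would place the level-$m$ region into layer $k+1-m$ (so the highest level, the block corners, becomes layer $1$); this is exactly what produces $\ell=k+1$ layers. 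The sets $S^i_j$ within a layer are the pieces of the corresponding level region, one per block configuration, so that in two dimensions layer $1$ consists of the ``corner'' blocks (both coordinates near boundaries), layer $2$ of the ``edge'' slabs (one coordinate near a boundary), and layer $3$ of the block interiors, matching \cref{fig:2dcircuit}.

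The central step is Condition 3 of \cref{def:coveringscheme}: every $v$ must have $\mc B(\{v\},(2k+1)d)$ contained in a \emph{single} set. I would establish this by a pigeonhole argument over the sorted coordinate distances. Writing $0\le r_{(1)}(v)\le\cdots\le r_{(k)}(v)\le L/2$ and appending the endpoints $r_{(0)}=0$ and $r_{(k+1)}=L/2$, the $k+1$ gaps between consecutive values sum to $L/2$, so some gap has width at least $\tfrac{L}{2(k+1)}$. Choosing the threshold at the midpoint of the widest gap separates the coordinates into $m$ boundary coordinates and $k-m$ interior coordinates, each at distance at least $\tfrac{L}{4(k+1)}$ from the threshold. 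Thus $v$ is ``robustly'' at level $m$: in every direction it sits at least $\tfrac{L}{4(k+1)}$ inside the corresponding level-$m$ cell. Setting $\tfrac{L}{4(k+1)}\ge (2k+1)d$ forces $L=\Theta(k^2 d)$ and guarantees that the radius-$(2k+1)d$ ball around $v$ stays within that one cell, which is Condition 3.

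For the remaining two conditions I would arrange the cell shapes so that they overlap across adjacent levels (as in the $\delta$-enlargements of \cref{fig:2dcircuit}) while staying separated within a level. Condition 2 (disjointness of $\mc B(S^i_j,d)$ within a fixed layer) follows by leaving a margin of width greater than $2d$ between same-level cells, which is automatic since the block geometry already separates distinct boundary/interior configurations by $\Theta(L)\gg d$; the only delicate case is distinct cell \emph{types} at the same level (e.g.\ horizontal versus vertical edge slabs), which I would keep apart by ``notching out'' the higher-level regions where they would otherwise meet---the notched-out points are precisely those covered by a higher-level cell, so Condition 3 is preserved. Condition 1 (the size bound) then reduces to checking that each $S^i_j$, together with its radius-$d$ neighborhood, fits inside a box of side $(8k^2+14k+2)d$ per axis; since each cell spans at most one period plus the overlap extensions, this is a direct computation once $L$ and the extension widths are fixed.

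I expect the main obstacle to be the tension between Conditions 2 and 3: coverage demands that the cells be large and generously overlapping so that every point---especially one sitting near an interface between two levels---has a full radius-$(2k+1)d$ ball inside one cell, whereas the within-layer disjointness demands that same-level cells not crowd each other. The sorted-distance pigeonhole is what resolves this cleanly: it shows that $k+1$ levels are exactly enough for every point to be robustly interior to some level at scale $\Theta(L/k)$, and pinning down the explicit constant $8k^2+14k+2$ is then a matter of carefully accounting for the core length, the margin width needed for the $2d$ separation, and the overlap and notch extensions along each axis.
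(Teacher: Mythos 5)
Your block decomposition into $k+1$ levels (corners, edge slabs, \dots, interiors) is essentially the same object as the paper's lattice coloring (\cref{fig:2dlatticecoloring}), but your mechanism for Condition 3 of \cref{def:coveringscheme} does not go through as stated. The sorted-distance pigeonhole produces a threshold $t_v$ that \emph{moves with the point} $v$, whereas the sets $S^i_j$ must be fixed once and for all; the inference ``thus $v$ sits $\tfrac{L}{4(k+1)}$ inside the corresponding level-$m$ cell'' is unjustified because no fixed cell has been tied to $t_v$. If you instead fix the cells with a single threshold, the extents forced by the extreme points of each class clash with Condition 2: already in 2D, edge-classified points exist with near-coordinate distance up to $L/4$ (e.g.\ distances $(L/4-\epsilon,\,L/2)$) and with far-coordinate distance down to $L/4$ (e.g.\ distances $(0,\,L/4+\epsilon)$), so both the horizontal and the vertical edge cells are forced to contain every point whose two coordinate distances are both $\approx L/4$ --- two same-layer cell types overlap. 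Your proposed repair (notch out the conflict regions and assert that ``the notched-out points are precisely those covered by a higher-level cell'') is exactly the missing content, not a consequence of the pigeonhole: notching un-covers points, re-covering them imposes lower bounds on the widths of the higher-level cells, and these constraints cascade through all $k+1$ levels (slabs need tubes, tubes need corners, \dots). None of this is carried out, and neither is the constant: the theorem requires $c\le\left((8k^2+14k+2)d\right)^k$, which any completed version of your construction would have to be re-tuned to meet. (The plan is salvageable --- e.g.\ take $k+1$ \emph{fixed} thresholds spaced at least $2(2k+2)d$ apart, let layer $j$ consist of all configuration classes relative to threshold $t_j$ shrunk by $d$, and pigeonhole over which threshold has no coordinate near it --- but then layers are indexed by threshold rather than by your ``level,'' and the bookkeeping remains well beyond what you wrote.)

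The paper sidesteps the entire tension you flag in your last paragraph with one simple device, which you could adopt. It starts from a \emph{disjoint} coloring quoted from prior work: $k+1$ colors, connected pieces, same-color pieces pairwise at distance at least $R$, each piece inside a box of side $2kR$. The covering sets are then defined as the uniform enlargements $S=\mc B(A,(2k+1)d)$ of the coloring pieces $A$. Since $\ell=k+1$ gives $(2\ell-1)d=(2k+1)d$, Condition 3 is immediate: $v$ lies in some piece $A$, hence $\mc B(\{v\},(2k+1)d)\subseteq\mc B(A,(2k+1)d)=S$. All that remains is Condition 2, which holds once $R>2(2k+2)d$ (the paper takes $R=(4k+5)d$), and Condition 1, which is the box computation $2kR+2(2k+1)d=(8k^2+14k+2)d$ per axis. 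In short, the overlap needed for coverage is generated by ball-enlarging a partition, rather than by engineering overlapping cells; with that one trick, your pigeonhole and notching apparatus becomes unnecessary.
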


Before giving the proof, we first discuss an example in 2D (\cref{fig:2dcircuit}). Our starting point is a new concept called lattice coloring.

\begin{dfn}[Lattice coloring]
\label{def:latticecoloring}
    A $(w,c,R)$ lattice coloring for the $k$-dimensional lattice is defined as follows. Suppose the lattice is divided into disjoint subsets where each subset is connected. Each subset is assigned a color. We demand the following properties:
    \begin{enumerate}
        \item There are $w$ different colors.
        \item Each subset has size at most $c$.
        \item Two different subsets with the same color must have distance at least $R$.
    \end{enumerate}
\end{dfn}

\cref{fig:2dlatticecoloring} shows a $(3,16R^2,R)$ coloring for the 2D lattice ($16R^2$ is an overestimate). Here $R$ is a free parameter. Similar coloring schemes have been widely used in the physics literature (e.g.~\cite{Brandão2019Finite}). Now, we use this coloring to construct a covering scheme for the 2D lattice.

\begin{lem}
    For 2D lattice there exists a $\left(3,3844 d^2,d\right)$ covering scheme for any integer $d>0$.
\end{lem}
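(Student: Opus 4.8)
The plan is to obtain the covering scheme directly from the $(3,16R^2,R)$ lattice coloring of \cref{fig:2dlatticecoloring} (\cref{def:latticecoloring}) by fixing the free parameter to $R=13d$ and then ``thickening'' every monochromatic cell. Set $\ell=3$ and use the three color classes as the three layers. For each connected monochromatic subset $P$ of color $i$ in the coloring, define the covering-scheme set $S^i_j:=\mc B(P,5d)$, the radius-$(2\ell-1)d=5d$ neighborhood of $P$; the cells of color $i$ thus produce the sets $S^i_1,\dots,S^i_{m_i}$. I would then check the three conditions of \cref{def:coveringscheme} in turn.

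Condition 3 is the whole point of the thickening and becomes automatic. Every vertex $v$ lies in a unique monochromatic cell $P$, say of color $i$; since $\mc B(\{v\},5d)\subseteq\mc B(P,5d)=S^i_j$ and $2\ell-1=5$, the required ball $\mc B(\{v\},(2\ell-1)d)$ is contained in this $S^i_j$. Condition 2 is where the value $R=13d$ enters: because $\mc B(S^i_j,d)=\mc B(P,6d)$, and two cells $P,P'$ of the same color are at distance at least $R=13d$ by property 3 of the coloring, their fattenings $\mc B(P,6d)$ and $\mc B(P',6d)$ remain at distance at least $13d-12d=d\ge 1$ and hence are disjoint.

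For Condition 1 I would bound each cell by its bounding box. From the geometry of \cref{fig:2dlatticecoloring}, every connected monochromatic cell fits in a box of side at most $3R$: the square bulk cells are the largest, while the corner blocks are $2R\times 2R$ and the stripe segments are at most $2R\times R$. Consequently $\mc B(S^i_j,d)=\mc B(P,6d)$ sits inside a box of side $3R+12d=51d$, so $|\mc B(S^i_j,d)|\le(51d+1)^2\le(62d)^2=3844d^2$ for every integer $d\ge 1$, which is exactly the claimed value of $c$.

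The verification is routine once the two design radii are chosen correctly, and this is where the only real tension lies: Condition 3 forces the thickening radius to be $(2\ell-1)d=5d$, after which Condition 2 demands that the radius-$6d$ fattenings of same-color cells stay disjoint, i.e.\ $R\ge 12d+1$, while Condition 1 wants $R$ as small as possible. Taking $R=13d$ is the smallest multiple of $d$ clearing the disjointness threshold, and it keeps the fattened boxes within the $3844d^2$ budget. The genuinely geometric ingredient, namely a $3$-coloring of the lattice into bounded connected cells whose same-color cells are $R$-separated, is already supplied by \cref{fig:2dlatticecoloring}, so no further combinatorial construction is needed for the two-dimensional case.
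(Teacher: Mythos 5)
Your proposal is correct and follows essentially the same route as the paper: one layer per color, thickening each monochromatic cell $P$ to $S=\mc B(P,5d)$, taking $R=13d$ so that the $6d$-fattenings of same-color cells stay disjoint, and using containment $\mc B(\{v\},5d)\subseteq\mc B(P,5d)$ for Condition 3. Your Condition 1 bookkeeping is in fact slightly more careful than the paper's (you bound $|\mc B(S^i_j,d)|$ via a $3R$ bounding box giving side $51d$, whereas the paper bounds the subset itself via a $4R$ overestimate giving side $62d$), but both land within the stated $c=3844d^2$.
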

This covering scheme is shown in \cref{fig:2dlayer1,fig:2dlayer2,fig:2dlayer3}. Below we explain it in detail.

The idea is the following: each color in the lattice coloring corresponds to a layer in the covering scheme (so $\ell=3$). We choose $R$ to be large enough, and then
\begin{itemize}
    \item Fix a color in the lattice coloring (say red regions in \cref{fig:2dlatticecoloring}), then for each small colored subset $A$ in \cref{fig:2dlatticecoloring}, we assign a subset $S=\mc B(A,5d)$ to the covering scheme (red regions in \cref{fig:2dlayer2}).
\end{itemize}
Now, we choose $R$ to make sure that Condition 2 in \cref{def:coveringscheme} is satisfied. Say we consider two red regions $A$ and $B$ in \cref{fig:2dlatticecoloring} that are separated by distance $R$. The corresponding subsets in the covering scheme are $S_1=\mc B(A,5d)$ and $S_2=\mc B(A,5d)$. Condition 2 in \cref{def:coveringscheme} demands that $S_1$ and $S_2$ are disjoint even after being enlarged further by distance $d$ (dashed regions in \cref{fig:2dlayer2}). That is, $\mc B(A,6d)$ and $\mc B(B,6d)$ must be disjoint. We therefore choose $R=13d$ to ensure this property.

Next, we show that Condition 3 in \cref{def:coveringscheme} is satisfied. Take any qubit $v\in[n]$, then it must lie within \emph{some} colored region in \cref{fig:2dlatticecoloring}. Suppose the region is red and call it $A$. Then $\mc B(\{v\},5d)\subseteq\mc B(A,5d)$ which is one of the subsets in the covering scheme in \cref{fig:2dlayer2}.

Finally, note that each small colored region in \cref{fig:2dlatticecoloring} is contained with in a box of $4R\times 4R$. The length scale of each subset in the covering scheme is at most $4R+2\times 5d=62d$. Therefore, all subsets in the 2D covering scheme has size at most $62d\times 62d=3844 d^2$.

\begin{proof}[Proof of~\cref{thm:latticecovering}]
The proof essentially repeats the above example. We first quote a coloring scheme for the $k$-dimensional lattice (see e.g.~\cite[below Definition 17]{huang2024learning}).

\begin{lem}[$k$-dimensional lattice coloring]
    For $k$-dimensional lattice there exists a $\left(k+1,(2kR)^k,R\right)$ lattice coloring. Each small colored region is contained in a $k$-dimensional box of length scale $2kR$.
\end{lem}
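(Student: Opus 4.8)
The plan is to construct the coloring explicitly from a coarse cubic grid and to color each lattice point by its \emph{codimension} with respect to that grid, using a carefully chosen schedule of nested thresholds. I would set the coarse period to $P = 2kR$, so that the grid planes sit at $P\mathbb{Z}$ in each coordinate, and introduce thresholds $\rho_c = cR/2$ for $c=0,1,\dots,k$. For a lattice point $x$ let $a_i(x)=\mathrm{dist}(x_i,P\mathbb{Z})$ be the distance of the $i$-th coordinate to its nearest grid plane, and let $a_{(1)}\le\cdots\le a_{(k)}$ be these values sorted. I then define the color of $x$ to be $c(x)=\max\{\,j : a_{(j)}\le\rho_j\,\}$ (and $0$ if $a_{(1)}>\rho_1$). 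Since $c(x)\in\{0,1,\dots,k\}$ this uses exactly $k+1$ colors, giving Property~1. Intuitively, color $c$ marks the thickened neighborhood of a codimension-$c$ face of the coarse grid: color $0$ is the cell bulk, color $k$ is a cube around each grid vertex, and intermediate colors are slab-like neighborhoods of intermediate faces, generalizing the green/red/purple picture of the 2D figure.

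Next I would identify the connected components and read off the size bounds. Because the thresholds are strictly increasing, the definition forces $a_{(c)}\le\rho_c<\rho_{c+1}<a_{(c+1)}$, so a color-$c$ point has a well-defined \emph{near-set} $S(x)=\{i : a_i(x)\le\rho_c\}$ of size exactly $c$. A single component is then specified by the near-set $S$, the choice of nearest grid plane for each coordinate in $S$, and the choice of bulk interval (a maximal run avoiding the thickened planes) for each coordinate outside $S$; each such component is a product of intervals, hence connected, and these products visibly partition the lattice. In the $c$ near-directions a component has extent at most $2\rho_c\le 2\rho_k=kR$, and in the $k-c$ far-directions its extent is at most one bulk interval, which is $<P=2kR$. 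Hence every component is contained in a box of side $2kR$ and has volume at most $(2kR)^k$, establishing Property~2 and the box claim.

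The heart of the argument is Property~3: any two distinct same-color components lie at lattice ($\ell_1$) distance at least $R$. I would prove this by a uniform one-coordinate argument using that each $a_i$ is $1$-Lipschitz in $x_i$. Given distinct color-$c$ components $A\ne B$, they must differ in one of three ways, and in each a single coordinate is forced to travel far. (i) If their near-sets differ, pick $i\in S_A\setminus S_B$: then $a_i\le\rho_c$ throughout $A$ but $a_i>\rho_{c+1}$ throughout $B$, and symmetrically for some $i'\in S_B\setminus S_A$; Lipschitzness forces $x_i$ and $x_{i'}$ each to move by more than $\rho_{c+1}-\rho_c$, for a total of at least $2(\rho_{c+1}-\rho_c)=R$. (ii) If the near-sets agree but some near coordinate $i$ hugs different grid planes, those planes are at least $P$ apart while $a_i\le\rho_c$ on both sides, forcing $|x_i^A-x_i^B|\ge P-2\rho_c\ge kR\ge R$. (iii) If the near-sets agree but some far coordinate $i$ lies in different bulk intervals, these are separated by a forbidden band of width $2\rho_{c+1}$, forcing $|x_i^A-x_i^B|\ge 2\rho_{c+1}\ge R$. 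The extreme colors are just special cases: color $0$ reduces to adjacent cell bulks separated by $2\rho_1=R$, and color $k$ to vertex cubes separated by $P-2\rho_k=kR$.

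I expect case (i) to be the main obstacle — two same-color neighborhoods of \emph{perpendicular} faces meeting near a shared higher-codimension face, exactly the ``perpendicular red arms near a green corner'' of the 2D picture. A single uniform threshold would let such regions touch; what rescues the construction is the \emph{linear} schedule $\rho_c=cR/2$, whose constant increment $\rho_{c+1}-\rho_c=R/2$ telescopes to the required separation $2(\rho_{c+1}-\rho_c)=R$ at every codimension simultaneously, while $\rho_k=kR/2$ stays small enough that $P=2kR$ still controls the region size. The only remaining routine point is integrality: one takes $R$ to be an even integer (or replaces $\rho_c$ by $\lceil cR/2\rceil$) so that all thresholds and the resulting separations are integers, which does not affect any of the bounds above.
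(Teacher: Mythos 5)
Your construction is essentially correct, and it is worth noting that the paper itself gives no proof of this lemma at all: it is quoted from~\cite{huang2024learning} (``see e.g.\ below Definition 17''), with only the 2D instance illustrated in \cref{fig:2dlatticecoloring}. Your coarse grid of period $P=2kR$ with colors indexed by codimension is the same construction that picture depicts for $k=2$ (purple bulk, red edge slabs, green vertex squares), so you have in effect supplied the missing general-$k$ proof rather than deviated from the paper. The arithmetic all checks out: $k+1$ colors; near coordinates confined to $2\rho_c\le kR$ and far coordinates confined to one bulk interval of length $<P$, giving the box of side $2kR$ and size $(2kR)^k$; and the three separation cases yield $2(\rho_{c+1}-\rho_c)=R$, $P-2\rho_c\ge kR$, and $2\rho_{c+1}\ge R$ respectively. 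Crucially, your case (i) and (iii) bounds use only the per-coordinate facts that near coordinates satisfy $a_i\le\rho_c$ and \emph{every} far coordinate satisfies $a_i>\rho_{c+1}$ (since $a_{(c+1)}$ is the minimum of the far values), both of which are valid.

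The one genuine slip is the claim that ``each such component is a product of intervals.'' For $k\ge 3$ this is false, because the color condition on the far coordinates is a \emph{sorted} condition $a_{(c+j)}>\rho_{c+j}$, not a product condition. Concretely, for $k=3$, $c=1$, near-set $\{1\}$: a point with $a_1\le R/2$ and $a_2,a_3\in(R,\,3R/2]$ satisfies $a_{(3)}\le\rho_3$ and hence has color $3$ (it lies in the $\ell_\infty$-ball of radius $\rho_3$ around a vertex), so the color-$1$ component is the bulk square with notches bitten out of its corners --- a subset of your product, not the product itself. This does not damage any of your three properties: the component is still contained in the product box (so the size bound stands), the separation argument never used the product shape, and connectivity can be restored by a short monotone-path argument --- from any point of a piece with fixed near-set, planes, and far cells, move each far coordinate one unit at a time toward the cell center; this only increases each $a_i$, hence preserves all sorted thresholds, and all such points meet at the cell-center configuration since $P/2=kR>\rho_k$. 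With that repair (plus your integrality remark and the routine clipping of regions at the lattice boundary, which only shrinks components and increases separations), the proof is complete.
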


Each color in the lattice coloring corresponds to a layer in the covering scheme (so $\ell=k+1$). We choose $R$ to be large enough, and then for each small colored subset $A$ in the lattice coloring, we assign a subset $S=\mc B(A,(2k+1)d)$ to the corresponding layer of the covering scheme.
\begin{itemize}
    \item The depth of learned circuit is $(2\ell-1)d=(2k+1)d$.
    \item To ensure Condition 2 in \cref{def:coveringscheme} is satisfied, we choose $R=2\times (2k+2)d+d=(4k+5)d$.
    \item The length scale of each subset in the covering scheme is at most $2kR + 2\times (2k+1)d = (8k^2+14k+2)d$. Therefore, $c$ is at most $\left((8k^2+14k+2)d\right)^k$.
\end{itemize}
\end{proof}

\begin{remark}\label{remark:covering}
    In the proof of \cref{thm:latticecovering} we have chosen $R=(4k+5)d$ which leads to the stated scaling of $c$. Note that choosing $R$ to be any number larger than that also gives a valid covering scheme, with a larger $c$. This is useful for the discussions in \cref{sec:ancilla}.
\end{remark}

\section{Detailed analysis}
\label{sec:complexity}
In this section we give a detailed analysis which leads to the following main result.

\begin{thm}[Main result]
\label{thm:maindetailed}
    There is an algorithm that, given copies of an unknown state $\ket{\psi}$, with the promise that $\ket{\psi}=U\ket{0^n}$ where $U$ is an unknown depth-$d$ circuit acting on a $k$-dimensional lattice (using arbitrary 2-qubit gates), outputs a depth-$(2k+1)d$ circuit $W$ that prepares $\ket{\psi}$ up to $\varepsilon$ trace distance, with success probability $1-\delta$. The algorithm uses $M$ copies of $\ket{\psi}$ and runs in time $T$, where
    \begin{equation}
        M=\frac{n^4\cdot 2^{O(c)}}{\varepsilon^4} \log\frac{n}{\delta},\quad T=\frac{n^4\cdot 2^{O(c)}}{\varepsilon^4} \log\frac{n}{\delta}+\left(\frac{nkd\cdot c}{\varepsilon}\right)^{O(d\cdot c)}.
    \end{equation}
Here, $c=O((3k)^{k+2} d)^k$, and $W$ uses $r\cdot n$ ancilla qubits where $r>0$ can be chosen to be an arbitrarily small constant.
\end{thm}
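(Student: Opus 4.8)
The plan is to instantiate the abstract reconstruction machinery of \cref{thm:coveringimplieslearning} with the explicit lattice covering scheme of \cref{thm:latticecovering}, and then to make the whole pipeline robust to the fact that only \emph{approximate} local inversions can be learned from data. Concretely, the output circuit $W$ will be the backward lightcone of the reconstruction process (\cref{def:reconstructionprocess}) built from learned approximate inversions. Since the $k$-dimensional covering scheme has $\ell=k+1$ layers, \cref{thm:coveringimplieslearning} already fixes the depth of $W$ at $(2\ell-1)d=(2k+1)d$ and, crucially, guarantees that the backward lightcone is a purely geometric object that stops at intermediate reset gates and never reaches the unknown input, independently of the actual gate values. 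It then remains to (i) learn the approximate inversions, (ii) control the error they introduce, and (iii) reduce the ancilla count to $rn$.

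For step (i), I would enumerate the $L=\sum_i m_i=O(n)$ sets $S_j^i$ of the covering scheme and, for each, learn the reduced density matrix of $\ket{\psi}$ on the ball $\mc B(S_j^i,d)$ (of size at most $c$) to trace distance $\eta$ by quantum state tomography, at a cost of $2^{O(c)}/\eta^2$ copies per region and a $\log(n/\delta)$ union-bound overhead (the same copies can be reused across regions). A genuine depth-$d$ local inversion of $S_j^i$ exists by \cref{fact:localinv}, so a brute-force search over an $(\varepsilon/\mathrm{poly})$-net of depth-$d$ two-qubit circuits acting on the $\le c$ qubits of the ball, scoring each candidate against the learned reduced density matrix, will find an $\eta$-approximate inversion; the size of this net is what produces the $\left(nkd\cdot c/\varepsilon\right)^{O(d\cdot c)}$ term in $T$.

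The crux of the argument, and the step I expect to be the main obstacle, is step (ii): a robust version of the invariance \cref{fact:invariant}. I would prove a \emph{gentle} replacement lemma stating that if a candidate unitary $V$ drives the marginal on $S_j^i$ to within trace distance $\eta$ of $\ketbra{0}$, then the corresponding replacement process (apply $V$, reset $S_j^i$ to $\ket{0}$, apply $V^\dagger$) perturbs \emph{any} input state by only $O(\sqrt{\eta})$ in trace distance, since a marginal that is $\eta$-close to the pure state $\ket{0}$ forces the global state to be $O(\sqrt{\eta})$-close to a product with the reset region. Applying this across all $L=O(n)$ replacement processes and summing the errors shows that the approximate reconstruction process $\tilde W$ satisfies $\tilde W\ket{\psi}\approx\ket{\psi}$ to trace distance $O(n\sqrt{\eta})$; choosing $\eta=\Theta(\varepsilon^2/n^2)$ makes this $\le\varepsilon$ and reproduces the stated $M=\Tilde{O}(n^4)\cdot 2^{O(c)}/\varepsilon^4$. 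Because the backward lightcone stops at reset gates regardless of the gate values, the $n$-qubit system marginal of $\tilde C\ket{0^{n+m}}$ equals that of $\tilde W\ket{\psi}$ and is therefore $O(\varepsilon)$-close to the \emph{pure} state $\ketbra{\psi}$; purity of $\tilde C\ket{0^{n+m}}$ then upgrades this to $\tilde C\ket{0^n}\otimes\ket{0^m}\approx\ket{\psi}\otimes\ket{\mathrm{junk}}$, the desired guarantee for $W=\tilde C$.

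For step (iii), I would invoke \cref{remark:covering}: enlarging the coloring radius $R$ makes each colored region large while keeping the covering scheme valid. The ancilla of $W$ are exactly the extra wires the backward lightcone acquires when it descends below the top layer, which only happens within a boundary shell of width $O(kd)$ around each region; for regions of linear size $\sim R$ this shell is an $O(kd/R)$ fraction of the volume, so $m=O(kd/R)\cdot n$. Taking $R$ a large enough constant multiple of $kd$ yields $m=rn$ for any target constant $r>0$, at the cost of inflating $c$ to the stated $O((3k)^{k+2}d)^k$. Assembling the sample and time bounds from steps (i)--(iii), together with the depth $(2k+1)d$ from step (i), then gives the claimed $M$ and $T$.
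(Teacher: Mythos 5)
Your overall architecture---the lattice covering scheme of \cref{thm:latticecovering} fed into \cref{thm:coveringimplieslearning}, approximate local inversions found by tomography plus an $\varepsilon$-net search, linear error accumulation with per-region accuracy $\Theta(\varepsilon^2/n^2)$, and ancilla reduction by enlarging the coloring radius $R$ per \cref{remark:covering}---is exactly the paper's proof. However, the step you yourself single out as the crux contains a false claim. Your ``gentle replacement lemma'' asserts that if $V$ drives the marginal on $S_j^i$ to within trace distance $\eta$ of $\ketbra{0}$, then the replacement process $\mc V^\dag\circ\mc R_{S_j^i}\circ\mc V$ perturbs \emph{any} input state by $O(\sqrt{\eta})$. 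Read unconditionally, this is false: take $V=I$ and $\ket{\psi}=\ket{0^n}$, so the hypothesis holds with $\eta=0$, and feed in the input $\ket{1^n}$; the reset maps it to an orthogonal state, a trace-distance-$2$ perturbation. Read conditionally (the bound applies only to inputs whose marginal under $V$ is $\eta$-close to $\ketbra{0}$), the lemma is true but no longer composes by ``summing the errors'': after the first replacement process the state is only $O(\sqrt{\eta})$-close to $\ket{\psi}$, so the hypothesis for the second process holds only with parameter $\eta+O(\sqrt{\eta})$, the next error increment is $O(\eta^{1/4})$, and iterating over $L$ processes gives increments $\eta^{1/2},\eta^{1/4},\eta^{1/8},\dots$, i.e.\ a bound of order $L\,\eta^{1/2^L}$, which is vacuous---not the $O(L\sqrt{\eta})$ you claim.

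The missing ingredient is the contractivity (data-processing) argument of \cref{lemma:robustreplacement,lemma:robustreconstructionprocess}: state the per-process bound only for $\ket{\psi}$ itself (the state for which each $V_{i,j}$ was learned to be an approximate inversion), and compose via
\begin{equation*}
\left\|\Phi_T\circ\cdots\circ\Phi_1(\ketbra{\psi})-\ketbra{\psi}\right\|_1
\leq \left\|\Phi_{T-1}\circ\cdots\circ\Phi_1(\ketbra{\psi})-\ketbra{\psi}\right\|_1+\left\|\Phi_T(\ketbra{\psi})-\ketbra{\psi}\right\|_1,
\end{equation*}
where the first term uses the triangle inequality together with the fact that the channel $\Phi_T$ cannot increase trace distance; every per-process error is then always evaluated at $\ket{\psi}$, never at a perturbed state, and accumulates linearly to $4T\sqrt{\varepsilon}$. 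With this substitution your parameter choices and the claimed $M$ and $T$ go through unchanged. Two minor further points: the theorem only requires the marginal of the prepared state to be $\varepsilon$-close to $\ketbra{\psi}$, so your final purity-upgrade step (which silently costs a square root in the approximation) is unnecessary; and your ancilla fraction $O(kd/R)$ underestimates the $k$-dependence---the boundary-shell estimate is $O((3k)^{k+1}d/R)$, which is precisely why $R$ must be taken $\Theta((3k)^{k+1}d)$ and why $c$ inflates to the stated $O((3k)^{k+2}d)^k$.
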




\subsection{The reconstruction process is robust}
We first show that the final error is controlled, if we replace local inversions in the reconstruction process with approximate local inversions.

\begin{dfn}[$\varepsilon$-approximate local inversion]
    Let $V$ be a unitary acting on $AB$, and denote the remaining system as $C$ (\cref{fig:2dinversion}). Let $\ket{\psi}$ be a state defined on $ABC$. $V$ is an $\varepsilon$-approximate local inversion of the region $A$ for $\ket{\psi}$ if
    \begin{equation}
        \bra{0}_A \Tr_{BC}\left(V\ketbra{\psi}V^\dag\right)\ket{0}_A \geq 1-\varepsilon.
    \end{equation}
\end{dfn}

\begin{lem}
\label{lemma:robustreplacement}
    Let $V$ be an $\varepsilon$-approximate local inversion of the region $A$ for $\ket{\psi}$. The corresponding $A$-replacement process is given by
    \begin{equation}\label{eq:approxreplacement}
        \mc V^\dag \circ \mc R_A \circ \mc V,
    \end{equation}
    where calligraphic letters denote channels, and $\mc R_A$ is the reset channel acting on $A$ which traces out the input and prepares $\ketbra{0}_A$. Then we have
    \begin{equation}
        \left\|\mc V^\dag \circ \mc R_A \circ \mc V(\ketbra{\psi})-\ketbra{\psi}\right\|_1\leq 4\sqrt{\varepsilon}.
    \end{equation}
\end{lem}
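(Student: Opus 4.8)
The plan is to strip off the two unitary channels at the ends, which preserve the trace norm, and reduce the claim to a statement about a single reset channel acting on a pure state. Write $\ket{\psi'}=V\ket{\psi}$, so that $\mc V(\ketbra{\psi})=\ketbra{\psi'}$ and $\mc V^\dagger(\ketbra{\psi'})=\ketbra{\psi}$. Since conjugation by the unitary $V^\dagger$ is trace-norm preserving,
\begin{equation}
\left\|\mc V^\dagger\circ\mc R_A\circ\mc V(\ketbra{\psi})-\ketbra{\psi}\right\|_1
=\left\|\mc R_A(\ketbra{\psi'})-\ketbra{\psi'}\right\|_1,
\end{equation}
and recalling that the reset channel gives $\mc R_A(\ketbra{\psi'})=\ketbra{0}_A\otimes\sigma_{BC}$ with $\sigma_{BC}=\Tr_A\ketbra{\psi'}$, it suffices to bound the right-hand side by $4\sqrt{\varepsilon}$.

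The key step is to reinterpret the approximate-inversion hypothesis as a fidelity bound against a genuine product state. The hypothesis states $\bra{0}_A\rho_A\ket{0}_A\ge 1-\varepsilon$ for $\rho_A=\Tr_{BC}\ketbra{\psi'}$; equivalently, writing $P=\ketbra{0}_A\otimes I_{BC}$, we have $p:=\bra{\psi'}P\ket{\psi'}\ge 1-\varepsilon$. I would decompose $P\ket{\psi'}=\sqrt{p}\,\ket{0}_A\otimes\ket{\chi}_{BC}$ for some normalized $\ket{\chi}$, and set $\ket{\Phi}:=\ket{0}_A\otimes\ket{\chi}_{BC}$, which is an exact fixed point of $\mc R_A$. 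Then $\left|\braket{\psi'}{\Phi}\right|^2=p\ge 1-\varepsilon$, and by the pure-state Fuchs--van de Graaf identity,
\begin{equation}
\left\|\ketbra{\psi'}-\ketbra{\Phi}\right\|_1 = 2\sqrt{1-p}\le 2\sqrt{\varepsilon}.
\end{equation}

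Finally I would combine these facts with the triangle inequality and contractivity of the partial trace. Since $\Tr_A\ketbra{\Phi}=\ketbra{\chi}$, and the partial trace can only decrease trace distance,
\begin{equation}
\left\|\sigma_{BC}-\ketbra{\chi}\right\|_1=\left\|\Tr_A\!\left(\ketbra{\psi'}-\ketbra{\Phi}\right)\right\|_1\le\left\|\ketbra{\psi'}-\ketbra{\Phi}\right\|_1\le 2\sqrt{\varepsilon}.
\end{equation}
Using that tensoring with the fixed state $\ketbra{0}_A$ preserves the trace norm, the first term below equals $\left\|\sigma_{BC}-\ketbra{\chi}\right\|_1$, so
\begin{equation}
\left\|\ketbra{0}_A\otimes\sigma_{BC}-\ketbra{\psi'}\right\|_1
\le \left\|\ketbra{0}_A\otimes\sigma_{BC}-\ketbra{\Phi}\right\|_1+\left\|\ketbra{\Phi}-\ketbra{\psi'}\right\|_1
\le 2\sqrt{\varepsilon}+2\sqrt{\varepsilon}=4\sqrt{\varepsilon},
\end{equation}
which is the desired bound. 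The one point requiring care — the ``hard part,'' though it is conceptually mild here — is that the reset channel does two things at once: it projects $A$ onto $\ket{0}$ and it erases the $A$--$BC$ correlations. Neither the hypothesis nor a single triangle inequality handles the correlation-erasure directly, so the trick is to route through the ideal product state $\ket{\Phi}$ and exploit that the partial trace is a CPTP map (hence contractive) to charge the correlation term at the same $2\sqrt{\varepsilon}$ cost as the projection term.
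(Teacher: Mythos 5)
Your proof is correct and takes essentially the same route as the paper's: reduce to the reset-channel term by unitary invariance, compare $V\ket{\psi}$ against the ideal product state $\ket{0}_A\otimes\ket{\chi}_{BC}$ via the pure-state Fuchs--van de Graaf identity (giving $2\sqrt{\varepsilon}$), then apply the triangle inequality plus data processing to charge the reset term another $2\sqrt{\varepsilon}$. The only cosmetic difference is that the paper invokes contractivity of the channel $\mc R_A$ directly (using that the product state is a fixed point of $\mc R_A$), whereas you unpack $\mc R_A$ into a partial trace followed by tensoring with $\ketbra{0}_A$ --- the same data-processing step in two elementary pieces.
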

\begin{proof}
    First, note that due to the unitary invariance of the trace distance,
    \begin{equation}
    \begin{aligned}
         \left\|\mc V^\dag \circ \mc R_A \circ \mc V(\ketbra{\psi})-\ketbra{\psi}\right\|_1 &= \left\|\mc V^\dag \circ \mc R_A \circ \mc V(\ketbra{\psi})-\mc V^\dag\circ\mc V(\ketbra{\psi})\right\|_1\\
         &=\left\| \mc R_A \circ \mc V(\ketbra{\psi})-\mc V(\ketbra{\psi})\right\|_1.
    \end{aligned}
    \end{equation}
    Next, we can write $V\ket{\psi}$ as
    \begin{equation}
        V\ket{\psi}=\sqrt{1-\varepsilon'}\ket{0}_A\ket{\phi}_{BC}+\sqrt{\varepsilon'}\ket{\mathrm{else}}_{ABC},
    \end{equation}
    where $\varepsilon'\leq \varepsilon$ and $\bra{0}_A \ket{\mathrm{else}}_{ABC}=0$. Using the relationship between fidelity and trace distance,
    \begin{equation}
        \left\|\mc V(\ketbra{\psi})-\ketbra{0}_A\otimes\ketbra{\phi}_{BC}\right\|_1=2\sqrt{1-\left|\bra{\psi}V^\dag\ket{0}_A\ket{\phi}_{BC}\right|^2}=2\sqrt{\varepsilon'}.
    \end{equation}
    Finally,
    \begin{equation}
        \begin{aligned}
            &\left\| \mc R_A \circ \mc V(\ketbra{\psi})-\mc V(\ketbra{\psi})\right\|_1\\
            &\leq \left\| \mc R_A \circ \mc V(\ketbra{\psi})-\ketbra{0}_A\otimes\ketbra{\phi}_{BC}\right\|_1+\left\|\ketbra{0}_A\otimes\ketbra{\phi}_{BC}-\mc V(\ketbra{\psi})\right\|_1\\
            &=\left\| \mc R_A \circ \mc V(\ketbra{\psi})-\mc R_A\left(\ketbra{0}_A\otimes\ketbra{\phi}_{BC}\right)\right\|_1+\left\|\ketbra{0}_A\otimes\ketbra{\phi}_{BC}-\mc V(\ketbra{\psi})\right\|_1\\
            &\leq \left\| \mc V(\ketbra{\psi})-\ketbra{0}_A\otimes\ketbra{\phi}_{BC}\right\|_1+\left\|\ketbra{0}_A\otimes\ketbra{\phi}_{BC}-\mc V(\ketbra{\psi})\right\|_1\\
            &\leq 4\sqrt{\varepsilon}.
        \end{aligned}
    \end{equation}
Here, the second line is by triangle inequality; the third line is because $\ketbra{0}_A\otimes\ketbra{\phi}_{BC}$ is invariant under $\mc R_A$; the fourth line is by data-processing inequality.
\end{proof}

\begin{lem}\label{lemma:robustreconstructionprocess}
    Let $\Phi_1,\Phi_2,\dots,\Phi_T$ be arbitrary replacement processes using $\varepsilon$-approximate local inversions of $\ket{\psi}$, where each $\Phi_i$ has the form of~\cref{eq:approxreplacement} where $V$ is an $\varepsilon$-approximate local inversion for some arbitrary region. Then
    \begin{equation}
        \left\|\Phi_T\circ \cdots\circ \Phi_2\circ \Phi_1(\ketbra{\psi})-\ketbra{\psi}\right\|_1\leq 4T\sqrt{\varepsilon}.
    \end{equation}
\end{lem}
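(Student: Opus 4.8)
The plan is to prove this by a straightforward induction on $T$, using \cref{lemma:robustreplacement} as the single-step bound together with the fact that each $\Phi_i$ is a quantum channel, so the data-processing (contractivity) inequality for trace distance applies. The key idea is that a chain of near-identity channels applied to $\ketbra{\psi}$ accumulates error at most additively: each replacement process moves the state by at most $4\sqrt{\varepsilon}$ in trace distance, and composing channels cannot increase the distance between two states.

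\begin{proof}
We argue by induction on $T$. The base case $T=1$ is exactly \cref{lemma:robustreplacement}, which gives $\|\Phi_1(\ketbra{\psi})-\ketbra{\psi}\|_1\leq 4\sqrt{\varepsilon}$. For the inductive step, write $\rho_t=\Phi_t\circ\cdots\circ\Phi_1(\ketbra{\psi})$ and suppose $\|\rho_{T-1}-\ketbra{\psi}\|_1\leq 4(T-1)\sqrt{\varepsilon}$. By the triangle inequality,
\begin{equation}
    \left\|\rho_T-\ketbra{\psi}\right\|_1\leq \left\|\Phi_T(\rho_{T-1})-\Phi_T(\ketbra{\psi})\right\|_1+\left\|\Phi_T(\ketbra{\psi})-\ketbra{\psi}\right\|_1.
\end{equation}
The first term is bounded using that $\Phi_T$ is a CPTP map (a composition of a unitary conjugation, a reset channel, and another unitary conjugation), so by the data-processing inequality $\|\Phi_T(\rho_{T-1})-\Phi_T(\ketbra{\psi})\|_1\leq\|\rho_{T-1}-\ketbra{\psi}\|_1\leq 4(T-1)\sqrt{\varepsilon}$. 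The second term is bounded by applying \cref{lemma:robustreplacement} directly to $\Phi_T$, which is a replacement process built from an $\varepsilon$-approximate local inversion, giving $4\sqrt{\varepsilon}$. Summing the two bounds yields $\|\rho_T-\ketbra{\psi}\|_1\leq 4T\sqrt{\varepsilon}$, completing the induction.
\end{proof}

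The only subtlety, and the one point requiring care, is the application of \cref{lemma:robustreplacement} to the second term: that lemma is stated as a bound on the distance between $\Phi_T(\ketbra{\psi})$ and $\ketbra{\psi}$ where $\Phi_T$ uses an $\varepsilon$-approximate local inversion \emph{of} $\ket{\psi}$. The hypothesis of the present lemma explicitly guarantees that each $\Phi_i$ uses an $\varepsilon$-approximate local inversion of $\ket{\psi}$ (not of some intermediate disturbed state), so \cref{lemma:robustreplacement} applies verbatim to each step with the \emph{same} reference state $\ket{\psi}$. This is exactly why the error is additive rather than compounding multiplicatively: every replacement process is measured against the fixed true state $\ket{\psi}$, and the data-processing inequality absorbs all the intermediate drift at no cost.
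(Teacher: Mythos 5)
Your proof is correct and follows essentially the same route as the paper: the paper also peels off $\Phi_T$ via the triangle inequality, bounds the first term by data-processing (contractivity of CPTP maps) and the second by \cref{lemma:robustreplacement}, and closes by induction. Your closing remark—that each $\Phi_i$ is an approximate inversion of the fixed state $\ket{\psi}$ rather than of the drifted intermediate state, which is why errors add rather than compound—is exactly the right reading of the hypothesis.
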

\begin{proof}
Note that
    \begin{equation}
        \begin{aligned}
            &\left\|\Phi_T\circ \cdots\circ \Phi_2\circ \Phi_1(\ketbra{\psi})-\ketbra{\psi}\right\|_1\\
            &\leq \left\|\Phi_T\circ \cdots\circ \Phi_2\circ \Phi_1(\ketbra{\psi})-\Phi_T(\ketbra{\psi})\right\|_1+\left\|\Phi_T(\ketbra{\psi})-\ketbra{\psi}\right\|_1\\
            &\leq \left\|\Phi_{T-1} \cdots\circ \Phi_2\circ \Phi_1(\ketbra{\psi})-\ketbra{\psi}\right\|_1+4\sqrt{\varepsilon},
        \end{aligned}
    \end{equation}
where the second line is by triangle inequality, the third line is by data-processing inequality and \cref{lemma:robustreplacement}. The claim follows from induction.
\end{proof}

\subsection{Analysis of the learning algorithm}
\label{sec:analysis}

Next we put everything together and give a detailed analysis of the learning algorithm. The algorithm is given copies of an unknown quantum state $\ket{\psi}$, with the promise that $\ket{\psi}=U\ket{0^n}$ where $U$ is a depth-$d$ circuit acting on a $k$-dimensional lattice.

\paragraph{Step 1: build a covering scheme.} We build a $\left(k+1,c,d\right)$ covering scheme (here $c=O(k^2 d)^k$ or larger, depending on the choice of $R$, see~\cref{remark:covering}) for the $k$-dimensional lattice according to \cref{thm:latticecovering}. This covering scheme is divided into $\ell=k+1$ layers. Note that there are less than $n$ subsets in total.

\paragraph{Step 2: learn reduced density matrices.} To find local inversions for the covering scheme it suffices to learn reduced density matrices on a radius-$d$ ball around each subset (dashed regions in~\cref{fig:2dlayer1,fig:2dlayer2,fig:2dlayer3}). Each reduced density matrix has size which has the same scaling as $c$ and there are less than $n$ of them. 

Suppose we would like to learn all of them within $\varepsilon_1$ trace distance, with $\delta$ failure probability. Here we use a simple existing result (see e.g.~\cite[Lemma 23]{huang2024learning}): for each copy of $\ket{\psi}$, we measure each qubit in a random Pauli basis. The collection of measurement outcomes is known as classical shadows or randomized measurement dataset~\cite{Huang2020Predicting,Elben2023randomized}. The desired result can be achieved by processing this dataset, which uses 
\begin{equation}
    M=\frac{2^{O(c)}}{\varepsilon_1^2} \log\frac{n}{\delta}
\end{equation}
copies of $\ket{\psi}$. The running time to process this dataset scales similarly. From now on everything is classical processing on the learned reduced density matrices, and we assume that all learned reduced density matrices are within $\varepsilon_1$ error (this happens with probability at least $1-\delta$).

\paragraph{Step 3: find approximate local inversions.} 

For each reduced density matrix, suppose it looks like the region $AB$ in \cref{fig:2dinversion}. We will brute force search over an $\varepsilon_1$-net on depth-$d$ circuits acting on $AB$ to find a depth-$d$ circuit that approximately inverts the region $A$. In this way we can find a $3\varepsilon_1$-approximate local inversion of $A$ for $\ket{\psi}$ (see e.g.~\cite[Proof of first claim of Theorem 9]{huang2024learning} for a proof). The running time scales as the size of the $\varepsilon_1$-net, which equals
\begin{equation}
    \left(\frac{kd\cdot c}{\varepsilon_1}\right)^{O(d\cdot c)}.
\end{equation}
See e.g.~\cite[Lemma 19]{huang2024learning} for a proof.

Note that this can be significantly improved if we assume a discrete gate set, which we do not discuss here.

\paragraph{Step 4: find approximate reconstruction circuit.} We have found a $3\varepsilon_1$-approximate local inversion for every subset in the covering scheme. Construct a reconstruction process (\cref{def:reconstructionprocess}) using these approximate local inversions, denote as $\Phi$. By \cref{lemma:robustreconstructionprocess}, we have
\begin{equation}
    \left\|\Phi(\ketbra{\psi})-\ketbra{\psi}\right\|_1\leq 4n\sqrt{3\varepsilon_1}.
\end{equation}
By \cref{thm:coveringimplieslearning}, the covering scheme guarantees that the backward lightcone $C$ of the output of $\Phi$ is contained entirely within $\Phi$ (in fact, the output state of $\Phi$ is independent of its input state). $C$ is a depth-$(2k+1)d$ circuit. Our final learned state $\hat\rho$ is thus equal to $\Phi(\ketbra{\psi})$, which can be prepared by running $C$ on all-$\ket{0}$ input and trace out those qubits that do not belong to the output of $\Phi$. To achieve $\varepsilon$-closeness in trace distance, it suffices to choose $\varepsilon_1=\frac{\varepsilon^2}{48 n^2}$.

This concludes the proof of the complexity statements in the main result.

\subsection{Optimizing the number of ancilla qubits}
\label{sec:ancilla}
Here we explicitly calculate (a rough upper bound of) the number of ancilla qubits needed for reconstructing $\ket{\psi}$, which corresponds to the red qubits in \cref{fig:1d_rec_circuit}. In~\cref{fig:2dlayer1,fig:2dlayer2,fig:2dlayer3}, the ancilla qubits live in the colored regions outside solid black boxes.

We start from a lattice coloring (e.g.~\cref{fig:2dlatticecoloring}). Note that each small colored region is contained in a $k$-dimensional box of length scale $2kR$; meanwhile each small colored region at least contains a $k$-dimensional box of length scale $R$ (see e.g.~\cite[below Definition 17]{huang2024learning}). The number of colored regions is at most $n/R^k$.

Next we upper bound the number of ancilla qubits that can be associated with a colored region. The length scale of a subset in a covering scheme is at most $2kR+2\times (2k+1)d\leq 3kR$, since we will choose $R$ to be at least $(4k+5)d$. The ancilla qubits live at $2k$ different $k-1$ dimensional surfaces with thickness $(2k+1)d\leq 3kd$; the total volume is at most $(2k)\times (3kR)^{k-1}\times (3kd)$. Overall, the total number of ancilla qubits in the entire circuit is at most
\begin{equation}
    \frac{n}{R^k}\times (2k)\times (3kR)^{k-1}\times (3kd) \leq \frac{n}{R}\times (3k)^{k+1} d.
\end{equation}
Choosing $R=L\cdot (3k)^{k+1} d$ for some sufficiently large constant $L$, the total number of ancilla qubits is an arbitrarily small constant times $n$. In fact we could also afford to choose $L=\omega(1)$ to be some small function (e.g. $\log\log\log n$) so that the total number of ancilla qubits is sublinear. Below we just consider $L$ as being a constant.

The complexity scales with $c$ as shown in~\cref{sec:analysis}, which is given by
\begin{equation}
    c\leq (3kR)^k = O((3k)^{k+2} d)^k. 
\end{equation}

\section{Testing quantum circuit complexity}
The following result is a stronger version than stated in~\cref{cor:testing}.

\begin{thm}
\label{theorem:testing}
    Given copies of an unknown state $\ket{\psi}$ on a $k$-dimensional lattice, with the promise that one of the following two cases hold:
    \begin{itemize}
    \item \textbf{Case 1: Low complexity.} $\ket{\psi}=U\ket{0^n}$ where the depth of $U$ is at most $d$;
    \item \textbf{Case 2: High complexity.} Let $\rho$ be any $n$-qubit state prepared by a depth at most $(2k+1)d$ circuit using $r\cdot n$ ancilla qubits, where $r$ is some small constant. Then $\left\|\rho-\ketbra{\psi}\right\|_1 > \varepsilon$.
\end{itemize}
There is an algorithm that decides which is the case, with success probability at least $1-\delta$, where the sample complexity and running time is the same as in \cref{thm:maindetailed}.
\end{thm}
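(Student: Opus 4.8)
The plan is to observe that the learning algorithm of \cref{thm:maindetailed} already contains a natural certificate that distinguishes the two cases, so that essentially no work beyond running the learner is required. Concretely, I would run exactly Steps 1--3 of the procedure in \cref{sec:analysis}: build a $(k+1,c,d)$ covering scheme via \cref{thm:latticecovering}; learn all reduced density matrices on radius-$d$ balls around each subset to trace-distance error $\varepsilon_1=\varepsilon^2/(48n^2)$ with total failure probability $\delta$; and then, for each region $A$ of the covering scheme, brute-force search the $\varepsilon_1$-net of depth-$d$ circuits on $\mc B(A,d)$ for an approximate local inversion. The decision rule is simply to \emph{output Case 1 if an approximate local inversion is found for every region, and output Case 2 if the search fails on at least one region}. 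Note that the algorithm need not even assemble the circuit $C$; the mere success or failure of the local search is the answer.

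The Case 1 direction is immediate from the existing analysis. By \cref{fact:localinv} an exact local inversion exists for each region, so conditioned on the tomography step succeeding (probability $\ge 1-\delta$) the $\varepsilon_1$-net search finds a $3\varepsilon_1$-approximate local inversion for \emph{every} region, exactly as in Step 3 of \cref{sec:analysis}. Hence in Case 1 the algorithm outputs Case 1 with probability $\ge 1-\delta$.

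The part I would isolate as the crux is the Case 2 direction, which I would argue by contradiction, and which hinges on the fact that the robustness statements make no assumption about the complexity of $\ket\psi$. Suppose the search were to find a $3\varepsilon_1$-approximate local inversion for \emph{every} region. Then \cref{lemma:robustreconstructionprocess}---whose only hypothesis is that each $V$ is an approximate local inversion of $\ket\psi$, valid for an arbitrary target state---yields a reconstruction process $\Phi$ with $\|\Phi(\ketbra\psi)-\ketbra\psi\|_1\le 4n\sqrt{3\varepsilon_1}\le\varepsilon$, and \cref{thm:coveringimplieslearning} extracts from $\Phi$ a depth-$(2k+1)d$ circuit $C$ on $n+m$ qubits with $m=r\cdot n$ whose system marginal is exactly $\hat\rho=\Phi(\ketbra\psi)$. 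This manufactures a state $\hat\rho$ prepared by a depth-$(2k+1)d$ circuit with $r\cdot n$ ancilla satisfying $\|\hat\rho-\ketbra\psi\|_1\le\varepsilon$, directly contradicting the Case 2 promise. Therefore in Case 2 the search must fail on at least one region, and the algorithm outputs Case 2; this direction is deterministic once the learned reduced density matrices are accurate.

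The conceptual obstacle one must get past is the temptation to estimate the global distance $\|\hat\rho-\ketbra\psi\|_1$ between the learned state and the unknown $\ket\psi$, which would be expensive; the observation above replaces this global test by a purely \emph{local} certificate (region-by-region existence of local inversions), and the covering-scheme machinery upgrades the local certificate into a global guarantee. The essential reason this is sound is that \cref{lemma:robustreplacement}, \cref{lemma:robustreconstructionprocess}, and \cref{thm:coveringimplieslearning} never use that $\ket\psi$ is low-depth---so finding local inversions everywhere is by itself enough to produce a global low-depth preparation of a state close to $\ket\psi$, which Case 2 forbids. Since only Steps 1--3 of \cref{sec:analysis} are executed, with the same choice $\varepsilon_1=\varepsilon^2/(48n^2)$, the sample and time complexity coincide with those of \cref{thm:maindetailed}; shrinking $\varepsilon_1$ by a further constant factor gives a strict separation between $\le\varepsilon$ and $>\varepsilon$ if a margin is desired.
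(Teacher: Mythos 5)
Your proposal is correct and follows essentially the same route as the paper: run Steps 1--3 of the learning procedure, use region-by-region success of the approximate-local-inversion search as the decision rule, and in Case 2 argue by contradiction that universal success would (via \cref{lemma:robustreconstructionprocess} and \cref{thm:coveringimplieslearning}, which never assume $\ket{\psi}$ is low-depth) yield a depth-$(2k+1)d$ preparation of a state within $\varepsilon$ of $\ket{\psi}$, violating the promise. Your write-up makes explicit the observation the paper leaves implicit---that the robustness lemmas and the backward-lightcone extraction are complexity-agnostic in $\ket{\psi}$---but the algorithm and the logic are the same.
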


\begin{proof}
We perform Step 1 and 2 as prescribed by \cref{sec:analysis} (note that $\varepsilon_1=\frac{\varepsilon^2}{48 n^2}$). Then, we declare ``Case 1: Low complexity'' if the search procedure to find all desired approximate local inversions in Step 3 all succeed. Otherwise, declare ``Case 2: High complexity''.

We assume that the reduced density matrices in Step 2 are all within $\varepsilon_1$ error, which happens with probability at least $1-\delta$. Conditioned on this event, the algorithm is always correct, because of the following reasoning.
\begin{itemize}
    \item In Case 1, all search procedures to find the desired approximate local inversions are guaranteed to succeed, so the algorithm must output ``Case 1''.
    \item In Case 2, suppose by contradiction that all search procedures succeed. Then, by the arguments of the main result, this actually gives a \emph{proof} that $\ket{\psi}$ can be prepared within $\varepsilon$ error, using a quantum circuit of depth $(2k+1)d$ and a small amount of ancilla qubits, which contradicts the assumption of Case 2. Therefore, some of the search procedures must fail, and the algorithm must output ``Case 2''.
\end{itemize}
\end{proof}

\section*{Acknowledgements}
This material is based upon work supported by the U.S. Department of Energy, Office of Science, National Quantum Information Science Research Centers, Quantum Systems Accelerator. Additional support is acknowledged from DOE Grant No. DE-SC0024124 and NSF Grant No. 2311733.

\printbibliography

\end{document}